\documentclass[12pt,letterpaper]{article}
\usepackage[T1]{fontenc}
\usepackage[margin=1in]{geometry} 

\usepackage[titletoc]{appendix}
\usepackage{titling}
\usepackage[authoryear]{natbib}
\usepackage{float}

\usepackage[dvipsnames]{xcolor}
\usepackage[colorlinks=true,
            linkcolor=Blue,
            citecolor=Blue,
            urlcolor=blue, 
            hyperfootnotes=false]{hyperref}
            
\usepackage[bottom]{footmisc}

\usepackage[onehalfspacing]{setspace} 

\usepackage[final]{microtype}

\usepackage[subtle]{savetrees}
\usepackage[small,compact]{titlesec}
\titlespacing{\paragraph}{0pt}{2.25ex plus 1ex minus .2ex}{0.4em}

\usepackage{graphicx, csquotes, enumitem, booktabs, threeparttablex, tabularx, array}
\usepackage{mathtools, amssymb, amsthm, bm, dsfont, etoolbox}
\allowdisplaybreaks

\usepackage{lmodern}

\theoremstyle{definition} 

\newtheorem*{example*}{Example}
\newtheorem{remark}{Remark}
\newtheorem*{remark*}{Remark}

\makeatletter
\newcommand{\exampleqedsymbol}{$\triangle$}

\AtBeginEnvironment{example}{%
  \pushQED{\qed}%
  \let\example@oldsym\qedsymbol
  \renewcommand{\qedsymbol}{\exampleqedsymbol}%
}
\AtEndEnvironment{example}{%
  \popQED%
  \let\qedsymbol\example@oldsym
}

\AtBeginEnvironment{example*}{%
  \pushQED{\qed}%
  \let\example@oldsym\qedsymbol
  \renewcommand{\qedsymbol}{\exampleqedsymbol}%
}
\AtEndEnvironment{example*}{%
  \popQED%
  \let\qedsymbol\example@oldsym
}
\makeatother

\makeatletter
\newcommand{\remarkqedsymbol}{$\triangle$}

\AtBeginEnvironment{remark}{%
  \pushQED{\qed}%
  \let\remark@oldsym\qedsymbol
  \renewcommand{\qedsymbol}{\remarkqedsymbol}%
}
\AtEndEnvironment{remark}{%
  \popQED%
  \let\qedsymbol\remark@oldsym
}

\AtBeginEnvironment{remark*}{%
  \pushQED{\qed}%
  \let\remark@oldsym\qedsymbol
  \renewcommand{\qedsymbol}{\remarkqedsymbol}%
}
\AtEndEnvironment{remark*}{%
  \popQED%
  \let\qedsymbol\remark@oldsym
}
\makeatother

\AtBeginEnvironment{recipe}{\vspace{\topsep}\par\kern8pt\hrule\kern2pt\hrule\relax}

\theoremstyle{plain} 
\newtheorem{assumption}{Assumption}
\newtheorem{proposition}{Proposition}

\newcommand{\R}{\mathbb{R}} 
\renewcommand{\P}[2][\mu]{\mathbb{P}_{#1}\left\{#2\right\}} 
\newcommand{\E}[2][\mu]{\mathbb{E}_{#1}\left[#2\right]} 
\newcommand{\Var}[2][\mu]{\textnormal{Var}_{#1}\left(#2\right)} 
\newcommand{\1}{\bm{1}} 
\newcommand{\I}[1]{\mathds{1}\left\{#1\right\}} 
\newcommand{\abs}[1]{\left|#1\right|} 
\newcommand{\norm}[1]{\left\|#1\right\|} 
\renewcommand{\to}[1][]{\overset{#1}{\rightarrow}} 
\newcommand{\dsqrt}[1]{\displaystyle\sqrt{#1}}

\newcommand{\curly}[1]{\left\{#1\right\}}
\renewcommand{\brack}[1]{\left[#1\right]}
\newcommand{\paren}[1]{\left(#1\right)}

\newcommand{\V}{\mathcal{V}}
\newcommand{\A}{\mathcal{A}}
\newcommand{\M}{\mathcal{M}}
\renewcommand{\l}{\lambda}
\renewcommand{\L}{\Lambda}
\newcommand{\htheta}{\hat{\theta}}

\newcommand{\tSigma}{\widetilde{\Sigma}}
\newcommand{\tsigma}{\widetilde{\sigma}}

\begin{document}
\title{Headline Estimation with Multiple Research Designs\thanks{I thank Isaiah Andrews, Anna Mikusheva, and Alberto Abadie for their guidance and support. I thank seminar participants from the MIT econometrics lunch for helpful comments and discussions. I gratefully acknowledge support from the Jerry A. Hausman Fellowship and the National Science Foundation Graduate Research Fellowship under Grant No. 1745302. I thank ChatGPT and Codex for research assistance.}}
\author{Vod Vilfort\thanks{Department of Economics, Massachusetts Institute of Technology, vod@mit.edu.}}
\date{\today}
\maketitle

\begin{abstract}
\noindent To study a scalar parameter, a researcher may consider multiple research designs. Based on the evidence across designs, the researcher may wish to formulate a headline estimate of the parameter. I examine how to choose this headline when it is unclear which design is most appropriate for studying the parameter. I model this setting by assuming that (i) exactly one of the designs is valid for the parameter and (ii) the researcher has ambiguity about which design is valid, represented by a class of priors over the candidate designs. To account for ambiguity, I propose reporting the headline estimate that minimizes the worst-case posterior risk over the class of priors. In three applications, I show cases where accounting for ambiguity materially affects the quantitative conclusion and cases where an existing headline is already close to optimal.
\end{abstract}

\clearpage

\section{Introduction}
To study a quantity of interest, a researcher may consider multiple research designs. One design might exploit a policy discontinuity, a second might use quasi-random assignment, and a third might leverage comparisons across groups or time periods. When different designs yield similar estimates, the researcher may view this as evidence for a common conclusion about the quantity of interest.\footnote{\citet[page 504]{garin2025impact} say that ``Using multiple research designs allows us to test the sensitivity of our results to empirical strategy.'' \citet[page 2907]{ganong2024spending} say that ``Each of these empirical exercises has distinct advantages and disadvantages, but they all lead to the same conclusion.'' \citet[page 1576]{chen2026women} say that their results ``remain remarkably consistent across both approaches.''} The researcher may then want a \textit{headline estimate}---a single number to emphasize in the abstract, introduction, or policy discussion. I examine how to formulate this headline when it is unclear which design is most appropriate for studying the quantity of interest. 

Researchers consider various approaches to the headline estimation problem. One approach is to headline a precision-weighted average of the design estimates, which gives more weight to designs with smaller standard errors \citep{garin2025impact, jain2026limits}. This can be efficient when every design is unbiased for one common effect, but that assumption is difficult to justify when different designs use distinct sources of identifying variation. A second approach is to headline the estimate from a favored design that is viewed as especially credible or policy relevant \citep{bhuller2017life, sager2025clean, chen2026women, fang2026high}. This avoids the common-effect assumption, but gives no systematic role to other designs. A third approach is to report the range of the design estimates \citep{hastings2018snap, alsan2023civil, ganong2024spending, ng2024returns}. This reveals the set of headlines that are attainable as convex averages of the design estimates, but does not determine what single number should summarize the quantitative conclusion. 

In this paper, I propose a decision-theoretic solution to the headline estimation problem. I model the quantity of interest as a scalar target parameter. Each design produces an estimate of a design-specific estimand, but only one design is \textit{target-valid} in the sense that its estimand identifies the target parameter---i.e., the target-valid design is ``most appropriate'' for studying the quantity of interest. The other estimands retain their design-specific interpretations and may differ from the target, although they can coincide with it numerically. The model therefore allows several ``complementary'' designs to illuminate a shared question without treating their estimates as unbiased measurements of a common effect.\footnote{The framing of designs as being ``complementary'' or as having ``different strengths and weaknesses'' is language that appears frequently in applied work: examples include \citet[page 407]{alsan2023civil}, \citet[page 217]{dunn2024denial}, \citet[page 14]{baran2025clean}, \citet[page 1]{davis2025mobile}, \citet[page 13]{erten2025employment}, \citet[page 3]{han2025trading}, and \citet[page 13]{agrawal2026economics}.} The researcher does not know which design is target-valid, leading to ambiguity over the candidate designs. I model this ambiguity as a class of \textit{design priors}. A design prior specifies how likely each design is to be target-valid. The class collects the set of design priors that the researcher is willing to entertain. Under this structure, I formulate the headline estimation problem as a statistical decision problem under ambiguity \citep{berger1985statistical, gilboa1989maxmin, stoye2012new}. To estimate the target parameter while accounting for ambiguity, I propose a conditional Gamma-minimax procedure \citep{dasgupta1989frequentist, betro1992conditional, giacomini2021robust}. The procedure conditions on the design estimates and chooses the scalar report that minimizes worst-case posterior risk (i.e., expected loss) over the class of design priors. I call this report the \textit{ambiguity-optimal headline estimate}, or \textit{optimal headline} for short.


I propose default specifications for the above decision problem. Under these specifications, the headline estimation procedure requires only the design estimates and their standard errors. The procedure accommodates any compact and convex class of design priors. For example, the \textit{simplex class} is the set of all design priors. Under this class, the procedure chooses the headline that minimizes the largest standardized discrepancy from the design estimates. Equivalently, place an interval around each estimate and expand every interval by the same multiple of its own standard error. The optimal headline is the first point at which all the intervals overlap. With two designs, the optimal headline is an inverse-standard-deviation-weighted average of the two design estimates---which differs from the inverse-variance-weighted average that is used in conventional precision-weighting. With more designs, the solution is determined by the set of pairwise headline estimation problems across designs.

The simplex class represents the benchmark of unrestricted ambiguity. In some cases, the researcher may have more structured forms of ambiguity. For example, the researcher may wish to limit attention to design priors within a neighborhood of a baseline prior. To formalize this, the \textit{baseline-prior perturbation class} begins from a baseline prior and considers the set of priors generated by $\epsilon$-perturbations of the baseline. When the baseline puts equal probability on each design, $\epsilon = 0$ recovers the conventional inverse-variance-weighted average while $\epsilon = 1$ recovers the simplex-optimal headline. Intermediate perturbation levels trace a monotone path between those two endpoint headlines. As a second example, the researcher may have a favored design whose target-validity is deemed to be especially likely. To formalize this, the \textit{favored-design ratio class} considers the set of design priors where the relative probability of the favored design is not too low---governed by a ratio parameter. This formalizes the practice of emphasizing one design by stating how much more plausible it must be than each alternative. As the ratio parameter grows, the optimal headline converges to the favored design estimate. These example structured ambiguity classes can serve as primary specifications when their restrictions are substantively justified or as sensitivity analyses relative to the unrestricted ambiguity benchmark.

In my framework, the performance of a candidate headline is measured by its worst-case posterior risk. Intuitively, the risk quantifies how well the design-specific evidence can support---or be reconciled with---a given headline. By construction, the optimal headline yields the minimum attainable risk. Comparing the risk of a paper's existing headline with the minimum attainable risk shows whether accounting for design ambiguity materially improves the headline report. The headline estimation procedure can therefore be used to support a paper's existing headline or to rank candidate headlines while keeping the degree of reconciliation visible. 

I illustrate my framework in three applications. In \citet{chen2026women}, the target is the effect of courtroom online broadcasting intensity on the gender gap in plaintiff win rate, and the designs are a difference-in-differences (DiD) specification and a Bartik instrumental variable (IV) strategy. Under the simplex class, the optimal headline supports the conclusion that broadcasting narrows the gap, but its high risk suggests that the DiD and IV designs do not tightly reconcile one exact magnitude. Under the favored-design ratio class for DiD---the paper's primary design---I examine how favored DiD would have to be to obtain meaningful risk reductions. In \citet{garin2025impact}, the targets are the effects of a 12-month incarceration sentence on later incarceration and labor market outcomes, and the two designs use sentencing-guideline discontinuities in North Carolina and random judge assignments in Ohio. Under an equal-probability baseline-prior perturbation class, the paper's precision-weighted averages for the main outcomes are already close to optimal, and the paper's substantive conclusions are largely unchanged. In \citet{ganong2024spending}, the target is the marginal propensity to consume out of unemployment benefits, and the six designs exploit benefit-receipt delays, changes in pandemic benefit supplements, and cross-state variation in expiration timing. Under the simplex class, the optimal headline indicates a high spending response, but its high risk reveals that the exact magnitude cannot be tightly reconciled across the six designs.

I emphasize two points regarding the headline estimation framework. First, the framework takes as given that the researcher has specified a scalar target parameter and a set of designs that could plausibly supply the identifying link. It does not decide whether the designs \textit{should} be viewed as speaking to one latent target parameter---that is a substantive judgment for the researcher to make. Second, some settings provide enough economic or structural information to combine different designs to directly identify a target parameter. In those settings, the headline estimation framework is not suitable.\footnote{For example, \citet{laliberte2021long} targets the share of neighborhood effects on long-term educational outcomes that operates through access to better schools, uses a spatial regression discontinuity design to identify the school effects and a movers design to identify the neighborhood effects, then combines them within an education production function decomposition to estimate the target share. The designs are different inputs to a structured identification argument rather than candidate estimates of the target.} 

This paper relates to three literatures. First, robust Bayesian decision theory studies how to make optimal decisions under classes of prior beliefs \citep{gilboa1989maxmin, stoye2012new}. I use this logic to represent ambiguity over the identity of the target-valid design and to choose a headline that minimizes worst-case posterior risk. The conditional Gamma-minimax criterion is applied conditional on the observed estimates, which distinguishes it from ex-ante Gamma-minimax criteria that evaluate decision rules based on worst-case Bayes risk. Advantages of the conditional approach include analytical and numerical tractability \citep{christensen2026optimal} and favorable dynamic consistency properties \citep{lim2026dynamically}. A disadvantage is that the resulting decision rules are not guaranteed to be admissible \citep{giacomini2021robust}.

Second, work on averaging and adaptation examines how to use a precise estimator that may be biased together with a robust estimator that is valid under weaker assumptions \citep{green1991james, cheng2019uniform, armstrong2025adapting}. Here the use of a risk-based performance criterion is related, but the present problem has no estimator known in advance to provide an unbiased anchor. Which design is target-valid is precisely what remains ambiguous.

Third, this paper contributes to recent work on how to combine evidence across models or research designs. \citet{bhattacharya2026robust} construct data-dependent weights for methodological triangulation using testable implications of candidate causal models. \citet{park2026choosing} exploit an ordering among matching, DiD, and hybrid estimands to select the hybrid estimand under minimax regret. My framework instead chooses a general scalar report while design ambiguity remains unresolved.\footnote{There is also more classical work on frequentist and Bayesian model averaging \citep{hansen2007least, hoeting1999bayesian}. With a singleton design-prior class, the optimal headline reduces to a Bayes estimator that aligns with the logic of Bayesian model averaging estimators.} \citet{vilfort2026robust} develops inference procedures for settings where there is ambiguity over how to average causal effects from a given design---e.g., an event study design may generate many causal effects across different treatment cohorts. The present framework instead considers decision-making in settings where there is ambiguity over a collection of candidate designs---e.g., a researcher may consider a weighted average for an event study design and a weighted average for an IV design.

The remainder of this paper proceeds as follows. Section \ref{arxiv1:sec:example.setting} illustrates the headline estimation framework in an example setting. Section \ref{arxiv1:sec:decision.problem} develops the general decision problem and imposes my default specifications. Section \ref{arxiv1:sec:proposed.procedures} characterizes the optimal headline under general ambiguity classes. Section \ref{arxiv1:sec:asymptotic.results} gives asymptotic results. Section \ref{arxiv1:sec:empirical.applications} presents empirical applications. Section \ref{arxiv1:sec:conclusion} concludes. The supplemental appendix contains proofs.

\section{Example Setting}\label{arxiv1:sec:example.setting}
In this section, I illustrate the framework at a high level in a two-design setting, postponing the general formulation and additional sensitivity analysis to later sections. \citet{chen2026women} study a judicial reform in China that mandated courts to broadcast legal proceedings on an online platform, and examine how changes in broadcasting intensity affect the female-male plaintiff win-rate gap---estimated to be $-0.0356$ in their all-litigants sample.\footnote{This is the estimated female coefficient from the authors' pre-reform all-litigants regression of win rate on a female indicator, control variables, and fixed effects \citep[Table 1]{chen2026women}.} Let $\theta \in \R$ denote the target parameter, defined as ``the effect of broadcasting intensity on the gender gap in plaintiff win rate.'' The authors' primary design uses a DiD specification, while their alternative design uses a Bartik IV to instrument for broadcasting intensity. For the all-litigants sample, the reported estimates and standard errors are
\begin{align}\label{arxiv1:eq:example.setting.evidence}
    Y_{\mathrm{DiD}} = 0.0394, \quad \sigma_{\mathrm{DiD}} = 0.00255, \quad Y_{\mathrm{IV}} = 0.0726, \quad \sigma_{\mathrm{IV}} = 0.00599.
\end{align}
This means that a 10 percentage-point increase in broadcasting intensity corresponds to a 0.394 percentage-point narrowing of the gender gap under the DiD design, and a 0.726 percentage-point narrowing under the IV design. Both designs imply that greater broadcasting intensity narrows the gender gap, although the IV estimate is larger. Suppose that a researcher wishes to formulate a headline estimate of $\theta$, but is unsure which of the two designs provides the more appropriate link to $\theta$.

Let design $1$ denote DiD and design $2$ denote IV. Let $(Y_{1},Y_{2})$ denote the design estimates, $(\sigma_{1},\sigma_{2})$ their standard errors, and $(\mu_{1},\mu_{2})$ their unknown estimands. One design is \textit{target-valid}: for some unknown $k \in \{1,2\}$, the target parameter is identified as $\theta = \mu_{k}$. To estimate $\theta$, the researcher must account for two unknowns: the value of the estimands $(\mu_{1},\mu_{2})$ and the identity of the target-valid design $k$. 

To account for the unknown estimands $(\mu_{1}, \mu_{2})$, the researcher conditions on the observed estimates $(Y_{1}, Y_{2})$ and adopts a diffuse-limit normal posterior, which yields $\mu_{1}|Y \sim N(Y_{1},\sigma_{1}^{2})$ and $\mu_{2}|Y \sim N(Y_{2},\sigma_{2}^{2})$. Let $a \in \R$ denote a candidate headline for $\theta$. If $k$ is target-valid, then the posterior mean squared error (MSE) is
\begin{align*}
    \underbrace{\E[]{(\theta - a)^{2}|Y,k}}_{\text{posterior MSE}} = \underbrace{(Y_{k} - a)^{2}}_{\substack{\text{posterior} \\ \text{squared bias}}} + \underbrace{\sigma_{k}^{2}}_{\substack{\text{posterior} \\ \text{variance}}}, \qquad \theta|Y,k \overset{d}{=} \mu_{k}|Y,k \sim \underbrace{N(Y_{k}, \sigma_{k}^{2})}_{\substack{\text{posterior} \\ \text{distribution}}}.
\end{align*}
Consider an oracle that knows $k$ and therefore headlines the design-$k$ estimate $a^{*} = Y_{k}$, leading to posterior MSE $\sigma_{k}^{2}$. Under $k$, the posterior MSE of a candidate headline $a \in \R$ relative to the oracle benchmark is given by the design-$k$ posterior risk
\begin{align*}
    R_{k}(a|Y) = \frac{\E[]{(\theta - a)^{2}|Y,k}}{\sigma_{k}^{2}} = \frac{(Y_{k} - a)^{2}}{\sigma_{k}^{2}} + 1, \quad k \in \{1,2\}.
\end{align*}
This risk quantity measures the posterior mean-squared discrepancy between $\theta$ and $a$ relative to how precisely design $k$ can be estimated---I show in Section \ref{arxiv1:sec:decision.problem} that $R_{k}(a|Y)$ arises naturally from a decision problem with loss function $L(a,\mu,k) = (\mu_{k} - a)^{2}/\sigma_{k}^{2}$.

To account for the unknown target-valid design $k$, the researcher considers different prior beliefs over $k$, which in this setting correspond to different prior probabilities $\l \in [0,1]$ that DiD is the target-valid design. Under a given design prior $\l$, the posterior risk is
\begin{align*}
    R_{\l}(a|Y) = \l R_{1}(a|Y) + (1-\l) R_{2}(a|Y).
\end{align*}
Each design prior $\l$ reflects \textit{uncertainty} over research designs. For example, $\l = 2/3$ says that DiD is twice as likely to be target-valid as IV, while $\l = 1/3$ says that DiD is half as likely. The researcher may entertain many such priors, giving rise to a class of design priors $\L \subseteq [0,1]$. The class $\L$ reflects \textit{ambiguity} over research designs.\footnote{This distinction between uncertainty and ambiguity---a single belief versus a class of beliefs---is consistent with standard formulations in decision theory \citep{berger1985statistical, gilboa1989maxmin, stoye2012new}.} The optimal headline is the action $a = \htheta_{\L}^{*}(Y)$ that minimizes worst-case posterior risk over the class $\L$. In particular, the optimal headline and the optimized risk are
\begin{align*}
    \htheta_{\L}^{*}(Y) = \arg\min_{a \in \R}\max_{\l \in \L}R_{\l}(a|Y), \quad R_{\L}^{*}(Y) = \min_{a \in \R}\max_{\l \in \L}R_{\l}(a|Y).
\end{align*}
Intuitively, the optimal headline $\htheta_{\L}^{*}(Y)$ guards against the worst-case design priors in $\L$. The worst-case posterior risk $\max_{\l \in \L}R_{\l}(a|Y)$ measures the consequences of compressing design-specific evidence into a scalar headline $a$, and the optimized risk $R_{\L}^{*}(Y) = \max_{\l \in \L}R_{\l}(\htheta_{\L}^{*}(Y)|Y)$ is the minimum attainable risk. The excess risk $R_{\L}^{*}(Y) - 1$ gives the proportional increase in risk relative to the oracle that knows the target-valid design, which provides a natural benchmark for interpreting the magnitudes of $R_{\L}^{*}(Y)$. 

The simplex class $\L_{\Delta} = [0,1]$ represents the benchmark of unrestricted ambiguity, wherein the researcher entertains every possible design prior. Under the simplex class, the solution is
\begin{align*}
    \htheta_{\Delta}^{*}(Y)
    &= \frac{\sigma_{1}^{-1}}{\sigma_{1}^{-1} + \sigma_{2}^{-1}}Y_{1} + \frac{\sigma_{2}^{-1}}{\sigma_{1}^{-1} + \sigma_{2}^{-1}}Y_{2}, \quad
    R_{\Delta}^{*}(Y) = \paren{\frac{\abs{Y_{1} - Y_{2}}}{\sigma_{1} + \sigma_{2}}}^{2} + 1.
\end{align*}
The optimal headline $\htheta_{\Delta}^{*}(Y)$ is an inverse-standard-deviation-weighted average. It places higher weight on the more precise design, but less aggressively than the conventional inverse-variance-weighted average considered in models where both designs are independent unbiased estimates of a common parameter. The optimized risk $R_{\Delta}^{*}(Y)$ quantifies how well the DiD and IV designs can support---or be reconciled with---the optimal headline. For the simplex class, the square-root excess risk $c_{\Delta}^{*}(Y) = \sqrt{R_{\Delta}^{*}(Y) - 1}$ provides a natural measure of reconciliation cost. In particular, $c_{\Delta}^{*}(Y)$ is the smallest standard-error multiplier needed for two-sided intervals around the design estimates to share a common point:
\begin{align*}
    c_{\Delta}^{*}(Y) = \min\curly{c \geq 0: I_{1}(c) \cap I_{2}(c) \neq \varnothing}, \quad I_{k}(c) = [Y_{k}-c\sigma_{k}, Y_{k}+c\sigma_{k}], \quad k \in \{1,2\}.
\end{align*}
That first common point is the simplex-optimal headline: $I_{1}(c_{\Delta}^{*}(Y)) \cap I_{2}(c_{\Delta}^{*}(Y)) = \curly{\htheta_{\Delta}^{*}(Y)}$.

From the \citet{chen2026women} DiD and IV design estimates and standard errors in \eqref{arxiv1:eq:example.setting.evidence}, the optimal headline and optimized risk under the simplex class $\L_{\Delta} = [0,1]$ are 
\begin{align*}
    \htheta_{\Delta}^{*}(Y) = 0.0493, \quad R_{\Delta}^{*}(Y) = 16.11.
\end{align*}
In particular, a 10 percentage-point increase in broadcasting intensity yields a 0.493 percentage-point narrowing of the gender gap in plaintiff win rate. Headlining the DiD estimate $a = Y_{1}$ yields worst-case risk $R_{\Delta}(Y_{1}|Y) = 31.72$, so the optimal headline lowers risk by $(1-(16.11/31.72)) \times 100 \approx 49$ percent. Even so, the optimized risk $R_{\Delta}^{*}(Y) = 16.11$ yields a $(16.11 - 1) \times 100 \approx 1500$ percent increase relative to the oracle that knows which design is target-valid. In particular, for the design intervals $I_{k}(c) = [Y_{k} - c\sigma_{k}, Y_{k} + c\sigma_{k}]$ to share a common point, one requires a standard-error multiplier of at least $c_{\Delta}^{*}(Y) = \dsqrt{16.11 - 1} \approx 3.89$. Thus, while the optimal headline supports the conclusion that broadcasting narrows the gender gap, the high risk suggests that the DiD and IV designs do not tightly reconcile one exact magnitude. When I return to this application in Section \ref{arxiv1:sec:empirical.applications}, I will consider a favored-design ratio class that privileges DiD---the paper's primary design---and examine how favored DiD would have to be to obtain meaningful
risk reductions.  

The above analysis considered a two-design setting with an unrestricted ambiguity class. The general decision problem allows for more designs and other ambiguity classes. The exposition below will separately introduce the ingredients used in the above analysis, including the latent-design setup, the beliefs over estimands and designs, the loss function preferences, and the proposed optimality criterion.

\section{Decision Problem}\label{arxiv1:sec:decision.problem}
Consider a researcher who wishes to estimate a scalar target parameter $\theta \in \R$. The researcher observes estimates $Y_{k} \in \R$ from research designs $k \in [K] = \{1,\ldots,K\}$, where $K \geq 2$. Letting $Y = (Y_{1}, \ldots, Y_{K})'$ denote the vector of design estimates, I assume that $Y$ is normally distributed:
\begin{align*}
    Y \sim N(\mu,\Sigma),
\end{align*}
where $\mu = (\mu_{1},\ldots,\mu_{K})' \in \M \subseteq \R^{K}$ is a vector of unknown design estimands, $\mu_{k} = \E{Y_{k}}$, and $\Sigma$ is a known positive definite covariance matrix. This assumption is motivated by large-sample approximations formalized in Section \ref{arxiv1:sec:asymptotic.results}. Let $\sigma_{k}^{2} = \Sigma_{kk} = \Var{Y_{k}} > 0$.

I represent the set of research designs with the set of standard unit vectors $\V = \{v_{1}, \ldots, v_{K}\}$. In particular, a given design $v_{k} \in \V$ yields $(v_{k}'Y, v_{k}'\mu, v_{k}'\Sigma v_{k}) = (Y_{k}, \mu_{k}, \sigma_{k}^{2})$. For some latent design $v \in \V$, the target parameter $\theta$ is identified by the corresponding design estimand: $\theta = v'\mu$. I refer to $v$ as the \textit{target-valid} design. Thus, the underlying state variable is the unknown estimand-design pair $(\mu, v) \in \M \times \V$, while the target is a state-dependent functional $\theta(\mu, v) = v'\mu$. For example, $v = v_{k}$ means that design $k$ is the target-valid design: $\theta = \theta(\mu, v_{k}) = \mu_{k}$. No restriction is imposed on design estimands $\mu_{j}$ where $v_{j} \neq v$. Those estimands may differ from $\theta$ by arbitrary amounts, although they may coincide with $\theta$ numerically at some realizations of $\mu \in \M$. Thus, the value of $v$ concerns the identifying link between a design and the target parameter---not a numerical requirement that exactly one coordinate of $\mu$ equal $\theta$.

The researcher has a loss function $L(a,\mu,v) \geq 0$ that quantifies the consequences of taking action $a \in \A \subseteq \R$ when the estimand is $\mu \in \M$ and the target-valid design is $v \in \V$. For example, the squared error loss function $L(a,\mu,v) = (v'\mu - a)^{2}$ measures how far the action $a$ is from the parameter $v'\mu = \theta$ when the estimand-design pair is $(\mu,v)$. In other words, it measures the error relative to whichever design is target-valid in the state. I consider a general class of such loss functions in Section \ref{arxiv1:sec:default.specifications}.

\subsection{Researcher Beliefs}\label{arxiv1:sec:researcher.beliefs}
Let $\Delta(\mathcal{X})$ represent the set of probability distributions over a given set $\mathcal{X}$. Let $\gamma \in \Delta(\M \times \V)$ denote prior beliefs about the state $(\mu,v)$. This joint prior can be decomposed as
\begin{align*}
    \gamma(d\mu,v_{k}) = \pi_{k}(d\mu)\l_{k},  \quad \pi_{k}(\cdot) = \pi(\cdot|v=v_{k}), \quad \l_{k} = \P[\l]{v = v_{k}},
\end{align*}
where $\pi_{k} \in \Delta(\M)$ is the conditional prior on $\mu \in \M$ given $v = v_{k}$ and $\l_{k}$ is the marginal prior probability that $v = v_{k}$. I call the vector $\l = (\l_{1}, \ldots, \l_{K})' \in \Delta(\V)$ a \textit{design prior}; its components are the design probabilities. Here, $\Delta(\V)$ denotes the standard simplex in $\R^{K}$:
\begin{align*}
    \Delta(\V) = \curly{\l \in \R^{K}: \l \geq 0,\ \sum_{k = 1}^{K}\l_{k} = 1}.
\end{align*}
The researcher has ambiguity over research designs, represented by a nonempty, compact, and convex class $\L \subseteq \Delta(\V)$ of design priors $\l \in \L$. Below are three example classes.

\begin{example*}[Simplex class]
The unrestricted-ambiguity benchmark is the simplex class:
\begin{align}\label{arxiv1:eq:simplex.class}
    \L_{\Delta} = \Delta(\V) = \curly{\l \in \R^{K}: \l \geq 0,\ \sum_{k = 1}^{K}\l_{k} = 1}.
\end{align}
Under the simplex class, the researcher entertains every belief about $v$, including dogmatic priors (the simplex vertices $\V = \{v_{1}, \ldots, v_{K}\}$) that put probability one on a single design.
\end{example*}

\begin{example*}[Baseline-prior perturbation class]
Consider a baseline design prior $\l^{0} \in \Delta(\V)$. The perturbation class is the set of $\epsilon$-perturbations towards other priors $q \in \Delta(\V)$ in the simplex:
\begin{align}\label{arxiv1:eq:perturbation.class}
    \L_{\epsilon}(\l^{0}) = (1 - \epsilon)\l^{0} + \epsilon \Delta(\V) = \curly{(1 - \epsilon)\l^{0} + \epsilon q: q \in \Delta(\V)}, \quad \epsilon \in [0,1].
\end{align}
The baseline $\l^{0}$ represents the researcher's best assessment about the uncertainty in $v$, while the perturbation level $\epsilon$ measures the ambiguity in that assessment. The perturbation class reduces to $\L_{0}(\l^{0}) = \{\l^{0}\}$ at $\epsilon = 0$ and expands to $\L_{1}(\l^{0}) = \Delta(\V)$ at $\epsilon = 1$.
\end{example*}

\begin{example*}[Favored-design ratio class]
Suppose a given design is favored in the sense that it is believed to be at least $r \geq 1$ times as plausible as every other design. Without loss of generality, suppose that $k = 1$ is the favored design. The ratio class is defined as
\begin{align}\label{arxiv1:eq:ratio.class}
    \L_{r} = \curly{\l \in \Delta(\V): \l_{1} \geq r\l_{k},\ \forall k \neq 1}, \quad r \geq 1.
\end{align}
The ratio parameter $r$ places a lower bound on the favored design's relative probability of being target-valid. If $K = 2$, the ratio class can be expressed as $\L_{r} = \{\l: \l_{1} \geq r/(r+1)\}$, which says that the researcher entertains any design prior satisfying $\P[\l]{v = v_{1}} \geq r/(r+1)$.
\end{example*}

Combining the conditional $\mu$-priors with a class of design priors yields a class of joint priors:
\begin{align*}
    \Gamma = \curly{\gamma: \gamma(d\mu, v_{k}) = \pi_{k}(d\mu)\l_{k},\ \forall k \in [K],\ \l \in \L}.
\end{align*}
Let $\phi_{\Sigma}(\cdot|\mu)$ denote the density function of $N(\mu,\Sigma)$. The marginal likelihood of $Y$ conditional on $v = v_{k}$ is
\begin{align*}
    m_{\pi, k}(Y) = \int_{\M} \phi_{\Sigma}(Y|\mu) \pi_{k}(d\mu).
\end{align*}
I assume that $m_{\pi, k}(Y) > 0$ for all $k \in [K]$. Bayes' rule gives the posterior beliefs
\begin{align*}
    \gamma(d\mu,v_{k}|Y) = \pi_{k}(d\mu|Y)\l_{\pi, k}(Y), \quad \pi_{k}(d\mu|Y) = \frac{\phi_{\Sigma}(Y|\mu)\pi_{k}(d\mu)}{m_{\pi, k}(Y)}, \quad \l_{\pi, k}(Y) = \frac{m_{\pi, k}(Y)\l_{k}}{\sum_{j = 1}^{K}m_{\pi, j}(Y)\l_{j}}.
\end{align*}
That is, the prior beliefs $(\gamma, \pi_{k})$ are updated into posterior beliefs $(\gamma(\cdot|Y), \pi_{k}(\cdot|Y))$. Likewise, $\l_{\pi, k}(Y)$ is the posterior probability that $v = v_{k}$. Letting $\l_{\pi}(Y) = (\l_{\pi, 1}(Y), \ldots, \l_{\pi, K}(Y))'$ denote the \textit{design posterior} associated with design prior $\l$, the \textit{design posterior class} is 
\begin{align*}
    \L_{\pi}(Y) = \curly{\l_{\pi}(Y): \l \in \L}.
\end{align*}
For $\L_{\Delta} = \Delta(\V)$, the posterior simplex class preserves unrestricted ambiguity: $\L_{\pi,\Delta}(Y) = \L_{\Delta}$. For $\L_{\epsilon}(\l^{0}) = (1 - \epsilon)\l^{0} + \epsilon \Delta(\V) = \{\l: \l \geq (1 - \epsilon)\l^{0}\}$, the posterior perturbation class is
\begin{align*}
    \L_{\pi,\epsilon}(Y;\l^{0}) = \curly{\Bar{\l} \in \Delta(\V): \frac{\Bar{\l}_{k}/m_{\pi,k}(Y)}{\sum_{j = 1}^{K} \Bar{\l}_{j}/m_{\pi,j}(Y)} \geq (1 - \epsilon)\l_{k}^{0},\ \forall k}.
\end{align*}
For $\L_{r} = \{\l: \l_{1} \geq r\l_{k},\ \forall k \neq 1\}$, the posterior ratio class is
\begin{align*}
    \L_{\pi,r}(Y) = \curly{\Bar{\l} \in \Delta(\V): \frac{\Bar{\l}_{1}/m_{\pi,1}(Y)}{\sum_{j = 1}^{K} \Bar{\l}_{j}/m_{\pi,j}(Y)} \geq r\frac{\Bar{\l}_{k}/m_{\pi,k}(Y)}{\sum_{j = 1}^{K} \Bar{\l}_{j}/m_{\pi,j}(Y)},\ \forall k \neq 1}.
\end{align*}
If the conditional $\mu$-priors $\pi_{k}$ are common across designs $k$, then marginal likelihoods $m_{\pi,k}(Y)$ are common and any design posterior class coincides with its design prior class: $\L_{\pi}(Y) = \L$.

\subsection{Optimality Criterion}
To estimate the target parameter $\theta$ while accounting for ambiguity over designs, I propose a conditional Gamma-minimax approach \citep{dasgupta1989frequentist, betro1992conditional, giacomini2021robust}. For each $k \in [K]$, define the design-specific posterior risk of action $a \in \A \subseteq \R$ as the posterior expected loss conditional on $v = v_{k}$:
\begin{align*}
    R_{\pi,k}(a|Y) = \int_{\M} L(a,\mu,v_{k}) \pi_{k}(d\mu|Y), \quad v = v_{k}.
\end{align*}
Under design prior $\l \in \L$, the posterior risk is
\begin{align*}
    R_{\pi,\l}(a|Y) = \sum_{k = 1}^{K} \l_{\pi, k}(Y) R_{\pi,k}(a|Y).
\end{align*}
For candidate headline $\htheta(Y)$, evaluate the design-specific and posterior risks at action $a = \htheta(Y)$. The conditional Gamma-minimax action $\htheta_{\pi,\L}^{*}(Y)$ minimizes the worst-case posterior risk over the class of joint priors $\Gamma$, and thus equivalently over the class of design priors $\L$:
\begin{align*}
    \htheta_{\pi,\L}^{*}(Y) \in \arg\min_{a \in \A} \max_{\l \in \L} R_{\pi,\l}(a|Y) = \arg\min_{a \in \A} \max_{\Bar{\l} \in \L_{\pi}(Y)} \sum_{k = 1}^{K} \Bar{\l}_{k}R_{\pi,k}(a|Y).
\end{align*}
The minimized worst-case posterior risk is
\begin{align*}
    R_{\pi,\L}^{*}(Y) = \max_{\l \in \L} R_{\pi,\l}(\htheta_{\pi,\L}^{*}(Y)|Y) = \max_{\Bar{\l} \in \L_{\pi}(Y)}\sum_{k = 1}^{K} \Bar{\l}_{k} R_{\pi,k}(\htheta_{\pi,\L}^{*}(Y)|Y).
\end{align*}
Thus, the conditional Gamma-minimax criterion defines the optimal headline as an optimal action $a=\htheta_{\pi,\L}^{*}(Y)$ under design ambiguity, where the maximization over $\l \in \L$ (or $\Bar{\l} \in \L_{\pi}(Y)$) accounts for the design ambiguity. The optimized risk $R_{\pi,\L}^{*}(Y)$ measures the consequences of the optimal action under worst-case design beliefs. I refer to $\htheta_{\pi,\L}^{*}(Y)$ as the \textit{ambiguity-optimal headline estimate}, or \textit{optimal headline} for short.\footnote{For a singleton $\L = \{\l\}$, the optimal headline is a Bayes estimator under the joint prior $\gamma$ induced by $(\pi, \l)$.}

\subsection{Default Specifications}\label{arxiv1:sec:default.specifications}
The above decision problem allows for general specifications of the researcher's loss function and beliefs. In this section, I propose default specifications which yield particularly tractable headline estimation procedures. I characterize the resulting procedures in Section \ref{arxiv1:sec:proposed.procedures}.

\subsubsection{Loss Function}
Let $\A = \R$. I consider weighted squared error loss functions
\begin{align*}
    L(a,\mu,v) = w(v)(v'\mu - a)^{2}, \quad w(v) > 0.
\end{align*}
The squared error component $(v'\mu - a)^{2}$ measures the discrepancy between the action $a \in \R$ and the target parameter $\theta = v'\mu$ when the estimand-design pair is $(\mu,v)$. The loss weights $w(v)$ allow the penalty from such discrepancies to depend on the latent design $v$. Letting $w_{k} = w(v_{k})$, the design-specific posterior risk is a conditional MSE:
\begin{align*}
    R_{\pi,k}(a|Y) = w_{k}\curly{\paren{\E[\pi]{\mu_{k}|v = v_{k},Y} - a}^{2} + \Var[\pi]{\mu_{k}|v = v_{k},Y}}, \quad v = v_{k}.
\end{align*}
For design prior $\l \in \L$, the posterior risk is therefore an average conditional MSE:
\begin{align*}
    R_{\pi,\l}(a|Y) = \sum_{k = 1}^{K} \l_{\pi, k}(Y) w_{k}\curly{\paren{\E[\pi]{\mu_{k}|v = v_{k},Y} - a}^{2} + \Var[\pi]{\mu_{k}|v = v_{k},Y}}.
\end{align*}
The loss weights $w_{k}$ play a different role from the class $\L$. The loss weights describe the cost of errors given the target-valid design, whereas the class describes the ambiguity over which design is target-valid. The former captures preferences while the latter captures beliefs. Either component can be varied while holding the other fixed. 

\subsubsection{Beliefs over Estimands} 
Let $\M = \R^{K}$ and consider common normal conditional $\mu$-priors:
\begin{align*}
    \pi_{k} = \pi_{\varsigma} = N(\beta_{0}, \varsigma^{-1}\Omega_{0}), \quad \forall k \in [K],
\end{align*}
where $\Omega_{0}$ is positive definite and $\varsigma > 0$. Because the conditional $\mu$-priors are common, so are the marginal likelihoods $m_{\pi, k}(Y) = m_{\pi}(Y)$, and hence any design posterior class coincides with its design prior class: $\L_{\pi}(Y) = \L$. The conditional $\mu$-posteriors are also common:
\begin{align*}
    \pi_{k}(\cdot|Y) = \pi_{\varsigma}(\cdot|Y) = N\paren{\paren{\varsigma\Omega_{0}^{-1}+\Sigma^{-1}}^{-1}\paren{\varsigma\Omega_{0}^{-1}\beta_{0}+\Sigma^{-1}Y}, \paren{\varsigma\Omega_{0}^{-1}+\Sigma^{-1}}^{-1}}, \quad \forall k \in [K].
\end{align*}
Letting $\varsigma \to 0$ yields a convenient diffuse-limit normal posterior 
\begin{align*}
    \pi_{\varsigma}(\cdot|Y) \to[d] \pi_{0}(\cdot|Y) = N(Y,\Sigma).
\end{align*}
This approximates settings where the researcher has $\mu$-priors that are low precision relative to the information in $Y$. Here this takes the form of a diffuse normal prior sequence $\{\pi_{\varsigma}\}_{\varsigma \to 0}$. I use $\pi_{0}(\cdot|Y) = N(Y,\Sigma)$ as the default $\mu$-posterior. The design-specific posterior risk becomes
\begin{align*}
    R_{\pi_{0},k}(a|Y) = R_{k}(a|Y) = w_{k}\curly{(Y_{k} - a)^{2} + \sigma_{k}^{2}}, \quad v = v_{k}.
\end{align*}
The posterior risk under $\l \in \L$ is therefore
\begin{align*}
    R_{\pi_{0},\l}(a|Y) = R_{\l}(a|Y) = \sum_{k = 1}^{K}\l_{k}w_{k}\curly{(Y_{k} - a)^{2} + \sigma_{k}^{2}}, \quad \l_{\pi_{0},k}(Y) = \l_{k}.
\end{align*}
Because design posteriors $\l_{\pi_{0}}(Y)$ coincide with design priors $\l$ under the default specifications, I will simply refer to the latter when speaking about design beliefs.

\subsubsection{Loss Function Weights}
I consider precision weights
\begin{align*}
    w(v) = (v'\Sigma v)^{-1} > 0, \quad w_{k} = w(v_{k}) = 1/\sigma_{k}^{2}.
\end{align*}
The resulting standardized squared error loss function is
\begin{align*}
    L(a,\mu,v) = \frac{(v'\mu - a)^{2}}{v'\Sigma v}.
\end{align*}
This places the design-specific errors on a common standard-error scale and prevents a highly imprecise design estimate from dominating the decision problem. The precision weights also yield the following design-neutrality property: an oracle that observes $v$ before acting incurs the same posterior risk regardless of the value of $v$. Intuitively, the oracle is not worse off learning that design $k$ is target-valid versus learning that design $j$ is target-valid.

\begin{proposition}[Design-neutrality]\label{arxiv1:prop:precision.weights}
The oracle procedure that reports $Y_{k}$ if and only if design $k$ is target-valid yields oracle posterior risk
\begin{align*}
    \sum_{k = 1}^{K}\I{v = v_{k}}R_{k}(Y_{k}|Y) = \sum_{k = 1}^{K}\I{v = v_{k}}w_{k}\sigma_{k}^{2}, \quad v \in \V.
\end{align*}
The oracle risk is the same across realizations of $v$ if and only if $w_{k} = c/\sigma_{k}^{2}$ for a constant $c > 0$. Normalizing to $c = 1$ yields the precision weights $w_{k} = 1/\sigma_{k}^{2}$.
\end{proposition}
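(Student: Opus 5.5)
The plan is a direct computation followed by a two-direction argument. First evaluate the design-specific posterior risk under the default diffuse-limit posterior at the oracle action. By the expression derived above,
\begin{align*}
R_{k}(a|Y) = w_{k}\curly{(Y_{k}-a)^{2}+\sigma_{k}^{2}},
\end{align*}
so setting $a = Y_{k}$ kills the squared-bias term and leaves $R_{k}(Y_{k}|Y) = w_{k}\sigma_{k}^{2}$. Since exactly one indicator $\I{v = v_{k}}$ equals one for each $v \in \V$, summing over $k$ gives the displayed identity. For $v = v_{j}$, the oracle risk is therefore the single number $w_{j}\sigma_{j}^{2}$.

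Next, prove the ``if and only if'' claim by viewing the oracle risk as the map $v_{j} \mapsto w_{j}\sigma_{j}^{2}$ on $\V$.

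For sufficiency, suppose $w_{k} = c/\sigma_{k}^{2}$ with $c>0$. Then $w_{j}\sigma_{j}^{2} = c$ for every $j$, so the risk does not depend on $v$.

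For necessity, suppose the oracle risk is the same across all realizations $v \in \V$. Then $w_{j}\sigma_{j}^{2} = w_{1}\sigma_{1}^{2} =: c$ for all $j \in [K]$. The constant $c$ is strictly positive because $w_{1} = w(v_{1}) > 0$ by the loss specification and $\sigma_{1}^{2} > 0$ by positive definiteness of $\Sigma$. Solving $w_{j}\sigma_{j}^{2} = c$ gives $w_{j} = c/\sigma_{j}^{2}$.

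Finally, note that the normalization $c = 1$ is without loss for the decision problem. Multiplying all weights by $c$ rescales every $R_{\l}(a|Y)$ by $c$, and this leaves both the argmin over $a$ and the argmax over $\l \in \L$ unchanged. With $c = 1$ we obtain the precision weights $w_{k} = 1/\sigma_{k}^{2}$, which agree with $w(v) = (v'\Sigma v)^{-1}$ because $v_{k}'\Sigma v_{k} = \sigma_{k}^{2}$.

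None of these steps is a real obstacle. The only point needing care is in the necessity direction: ``the same across realizations of $v$'' must be read as holding over every element of $\V$, so that all $K$ values $w_{j}\sigma_{j}^{2}$ are compared, and the constant must be shown to be strictly positive rather than merely nonnegative.
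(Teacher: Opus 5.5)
Your proposal is correct and follows essentially the same route as the paper: both evaluate the design-specific posterior risk at $a = Y_{k}$ to get $w_{k}\sigma_{k}^{2}$, then use $w_{k} > 0$ and $\sigma_{k}^{2} > 0$ to show that equal oracle risk forces $w_{k}\sigma_{k}^{2} = c > 0$. Your explicit sufficiency direction and your scale-invariance remark justifying the normalization $c = 1$ are harmless additions that the paper leaves implicit.
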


\begin{proof}
See Appendix \ref{arxiv1:app:proof:precision.weights}.    
\end{proof}

Under precision weights $w_{k} = 1/\sigma_{k}^{2}$, the design-specific posterior risk becomes
\begin{align*}
   R_{k}(a|Y) = \frac{(Y_{k} - a)^{2}}{\sigma_{k}^{2}} + 1, \quad v = v_{k}.
\end{align*}
In words, the design-specific posterior risk of an action $a \in \R$ is equal to one plus the squared standardized discrepancy between $a$ and $Y_{k}$. Intuitively, the design-$k$ penalty of headlining $a$ grows with $a$'s distance from the design estimate relative to that design's standard error. The posterior risk under $\l \in \L$ is the corresponding average squared standardized discrepancy ($+1$) across designs: 
\begin{align*}
    R_{\l}(a|Y) = \sum_{k = 1}^{K}\l_{k}\frac{(Y_{k} - a)^{2}}{\sigma_{k}^{2}} + 1.
\end{align*}
Thus, $R_{\l}(a|Y)-1$ gives the mean-squared standardized discrepancy of a candidate headline $a$ from the design estimates $(Y_{1}, \ldots, Y_{K})$, where the mean is taken under $\l \in \L$.

\begin{remark}[Relationship to unweighted squared error loss]
For unweighted (i.e., $w(v) = 1$) squared error loss $L(a,\mu,v) = (v
'\mu - a)^{2}$, the design-specific posterior risk at $v_{k}$ is $(Y_{k} - a)^{2} + \sigma_{k}^{2}$. Comparing this posterior risk to that of an oracle with knowledge of $v$ yields posterior regret
\begin{align*}
    \frac{(Y_{k} - a)^{2} + \sigma_{k}^{2}}{\min_{a \in \R}\curly{(Y_{k} - a)^{2} + \sigma_{k}^{2}}} = \frac{(Y_{k} - a)^{2}}{\sigma_{k}^{2}} + 1, \quad v = v_{k}.
\end{align*}
This posterior regret measure is analogous to the one considered in \citet{andrews2026communicating}, but defined here as a ratio of posterior risks instead of a level-difference. The ratio comparison is analogous to the ex-ante adaptation regret formulation considered in \citet{armstrong2025adapting}. Thus, the posterior risk under precision-weighted squared error loss coincides with a measure of posterior (ex-post) adaptation regret under unweighted squared error loss.
\end{remark}

\section{Proposed Procedures}\label{arxiv1:sec:proposed.procedures}
Under the default specifications in Section \ref{arxiv1:sec:default.specifications}, the optimal headline estimate is
\begin{align*}
    \htheta_{\L}^{*}(Y) = \arg\min_{a \in \R} \max_{\l \in \L} \sum_{k = 1}^{K} \l_{k}\frac{(Y_{k} - a)^{2}}{\sigma_{k}^{2}} + 1.
\end{align*}
For any candidate headline $a \in \R$, its worst-case posterior risk is
\begin{align*}
    R_{\L}(a|Y) = \max_{\l \in \L} \sum_{k = 1}^{K} \l_{k}\frac{(Y_{k} - a)^{2}}{\sigma_{k}^{2}} + 1.
\end{align*}
The $+1$ term is the irreducible oracle risk of the infeasible procedure that reports $Y_{k}$ if and only if design $k$ is target-valid. Thus, $(R_{\L}(a|Y) - 1) \times 100$ is the percentage increase over oracle risk. The optimal headline and optimized risk are
\begin{align*}
    \htheta_{\L}^{*}(Y) = \arg\min_{a \in \R} R_{\L}(a|Y), \quad
    R_{\L}^{*}(Y) = \min_{a \in \R} R_{\L}(a|Y) = R_{\L}(\htheta_{\L}^{*}(Y)|Y).
\end{align*}
If a paper has a baseline headline $a_{B} \in \R$, then it is useful to define
\begin{align*}
    R_{\L}^{B} = R_{\L}(a_{B}|Y), \quad R_{\L}^{*} = R_{\L}(\htheta_{\L}^{*}(Y)|Y).
\end{align*}
Comparing the baseline risk $R_{\L}^{B}$ with the optimized risk $R_{\L}^{*}$ shows whether accounting for the design ambiguity represented by $\L$ materially improves the headline report. The reduction in worst-case risk is $R_{\L}^{B} - R_{\L}^{*}$, and the percentage reduction is
\begin{align*}
    \paren{1 - \frac{R_{\L}^{*}}{R_{\L}^{B}}} \times 100.
\end{align*}
This provides a convenient bounded summary.

In Section \ref{arxiv1:sec:general.characterization}, I provide a general characterization of the proposed procedure for nonempty, compact, and convex classes $\L$. In Section \ref{arxiv1:sec:characterization.full.ambiguity}, I specialize to the simplex class $\L_{\Delta}$ defined in \eqref{arxiv1:eq:simplex.class}, which represents the benchmark of unrestricted ambiguity. In Section \ref{arxiv1:sec:characterization.structured.ambiguity}, I consider more structured forms of ambiguity, specializing first to the perturbation class $\L_{\epsilon}(\l^{0})$ defined in \eqref{arxiv1:eq:perturbation.class} and then to the ratio class $\L_{r}$ defined in \eqref{arxiv1:eq:ratio.class}.

\subsection{General Characterization}\label{arxiv1:sec:general.characterization}
Let $D_{k}(a|Y) = \abs{Y_{k} - a}/\sigma_{k}$ denote the standardized discrepancy under design $k$, which measures the distance between headline $a$ and estimate $Y_{k}$ in units of the design's standard error. For any nonempty, compact, and convex class $\L \subseteq \Delta(\V)$, the optimal headline and optimized risk are
\begin{align*}
    \htheta_{\L}^{*}(Y) = \arg\min_{a \in \R}R_{\L}(a|Y), \quad R_{\L}^{*}(Y) = R_{\L}(\htheta_{\L}^{*}(Y)|Y), \quad R_{\L}(a|Y) = \max_{\l \in \L} \sum_{k = 1}^{K} \l_{k}D_{k}(a|Y)^{2} + 1.
\end{align*}
For any design prior $\l \in \Delta(\V)$, define its Bayes headline and excess posterior risk as
\begin{align*}
    a(\l|Y) = \frac{\sum_{k = 1}^{K} \l_{k}Y_{k}/\sigma_{k}^{2}}{\sum_{k = 1}^{K} \l_{k}/\sigma_{k}^{2}}, \quad \rho(\l|Y) = \sum_{k = 1}^{K} \l_{k}D_{k}(a(\l|Y)|Y)^{2} = \frac{\sum_{j < k} (Y_{j} - Y_{k})^{2}(\l_{j}/\sigma_{j}^{2})(\l_{k}/\sigma_{k}^{2})}{\sum_{\ell = 1}^{K} \l_{\ell}/\sigma_{\ell}^{2}}.
\end{align*}
The following result characterizes the headline estimation procedure.

\begin{proposition}[Least favorable design priors]\label{arxiv1:prop:least.favorable}
The optimal headline $\htheta_{\L}^{*}(Y)$ exists uniquely. Moreover, there exists $\l^{*} \in \L$ such that
\begin{align*}
    \l^{*} \in \arg\max_{\l \in \L} \rho(\l|Y), \quad \htheta_{\L}^{*}(Y) = a(\l^{*}|Y), \quad R_{\L}^{*}(Y) = \rho(\l^{*}|Y) + 1.
\end{align*}
That is, the optimal headline can be represented as an inverse-variance-weighted average under a least favorable design prior $\l^{*} \in \L$. Moreover, $\l^{*}$ is a worst-case design prior at $a = \htheta_{\L}^{*}(Y)$:
\begin{align*}
    R_{\L}(\htheta_{\L}^{*}(Y)|Y) = R_{\l^{*}}(\htheta_{\L}^{*}(Y)|Y).
\end{align*}
For $K = 2$, define $p_{\Delta} = \sigma_{1}/(\sigma_{1} + \sigma_{2})$, which satisfies (uniquely when $Y_{1} \neq Y_{2}$)
\begin{align*}
    p_{\Delta} \in \arg\max_{p \in [0,1]} \rho(\l(p)|Y), \quad \rho(\l(p)|Y) = \frac{p(1 - p)(Y_{1} - Y_{2})^{2}}{p\sigma_{2}^{2} + (1 - p)\sigma_{1}^{2}}, \quad \l(p) = (p, 1 - p)'.
\end{align*}
Define the interval of $\L$ probabilities on design 1 and the projection of $p_{\Delta}$ onto this interval as
\begin{align*}
    \mathcal{I}_{\L} = \curly{p \in [0,1]: \l(p) \in \L} = \brack{\underline{p}_{\L}, \overline{p}_{\L}}, \quad p_{\L}^{*} = \arg\min_{p \in \mathcal{I}_{\L}}\abs{p - p_{\Delta}} = \min\curly{\max\curly{p_{\Delta}, \underline{p}_{\L}}, \overline{p}_{\L}}.
\end{align*}
In this case, the optimal headline and optimized risk are
\begin{align*}
    \htheta_{\L}^{*}(Y) = \frac{p_{\L}^{*}Y_{1}/\sigma_{1}^{2} + (1 - p_{\L}^{*})Y_{2}/\sigma_{2}^{2}}{p_{\L}^{*}/\sigma_{1}^{2} + (1 - p_{\L}^{*})/\sigma_{2}^{2}}, \quad R_{\L}^{*}(Y) = \frac{p_{\L}^{*}(1 - p_{\L}^{*})(Y_{1} - Y_{2})^{2}}{p_{\L}^{*}\sigma_{2}^{2} + (1 - p_{\L}^{*})\sigma_{1}^{2}} + 1.
\end{align*}
If $Y_{1} \neq Y_{2}$, then $\l(p_{\L}^{*})$ is the unique least favorable design prior. If $Y_{1} = Y_{2}$, then every $\l \in \L$ is least favorable and $\htheta_{\L}^{*}(Y) = Y_{1} = Y_{2}$ with $R_{\L}^{*}(Y) = 1$.
\end{proposition}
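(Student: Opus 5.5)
The plan is to treat the problem as a convex–concave saddle-point problem and apply a minimax theorem. Define $f(a,\l) = \sum_{k}\l_{k}D_{k}(a|Y)^{2} = \sum_{k}\l_{k}(Y_{k}-a)^{2}/\sigma_{k}^{2}$ on $\R \times \L$. The function $f$ is linear, hence concave and continuous, in $\l$. It is a convex quadratic in $a$ for each $\l$.

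\textbf{Existence and uniqueness of the optimal headline.} The map $a \mapsto R_{\L}(a|Y) - 1 = \max_{\l \in \L} f(a,\l)$ is a pointwise maximum of convex functions, so it is convex and continuous. Because every $\l \in \Delta(\V)$ has some $\l_{k} \geq 1/K$, it also satisfies
\[
R_{\L}(a|Y) - 1 \geq \min_{k} (Y_{k}-a)^{2}/(K\sigma_{k}^{2}) \to \infty
\]
as $\abs{a} \to \infty$, so a minimizer exists.

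For uniqueness I would argue strict convexity directly. For $a \neq b$ and $t \in (0,1)$, let $\l$ attain the max at $c = ta + (1-t)b$. Since $\sum_{k}\l_{k}/\sigma_{k}^{2} > 0$, the function $f(\cdot,\l)$ is strictly convex. Hence
\[
\max f(c,\cdot) = f(c,\l) < t f(a,\l) + (1-t) f(b,\l) \leq t\max f(a,\cdot) + (1-t)\max f(b,\cdot).
\]
This gives uniqueness.

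\textbf{Least favorable prior.} Since $\L$ is compact and convex, Sion's minimax theorem gives
\[
\min_{a}\max_{\l} f = \max_{\l}\min_{a} f.
\]
Attainment of the outer max follows from compactness and from upper semicontinuity of $\l \mapsto \min_{a} f(a,\l)$, which is a minimum of continuous functions. The inner minimum is the weighted least-squares solution $a(\l|Y)$, with value $\rho(\l|Y)$. The closed form of $\rho$ follows from the standard identity
\[
\sum_{k} w_{k}(Y_{k}-\bar Y_{w})^{2} = \frac{\sum_{j<k} w_{j}w_{k}(Y_{j}-Y_{k})^{2}}{\sum_{\ell} w_{\ell}}, \quad w_{k} = \l_{k}/\sigma_{k}^{2}.
\]
Take $\l^{*} \in \arg\max \rho$. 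Then $(\htheta^{*}, \l^{*})$ is a saddle point, using the standard fact that when min–max equals max–min with both attained, the pair of optimizers is a saddle point. So
\[
\max_{\l} f(\htheta^{*},\l) = f(\htheta^{*},\l^{*}) = \min_{a} f(a,\l^{*}).
\]
By strict convexity of $f(\cdot,\l^{*})$, this forces $\htheta^{*} = a(\l^{*}|Y)$. It also gives $R^{*} = \rho(\l^{*}|Y) + 1$ and shows that $\l^{*}$ is worst-case at $\htheta^{*}$.

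\textbf{The case $K=2$.} Write $d = Y_{1}-Y_{2} \neq 0$ and $g(p) = p(1-p)/(p\sigma_{2}^{2} + (1-p)\sigma_{1}^{2})$ on $[0,1]$. Differentiating gives a numerator proportional to
\[
(1-2p)\bigl(p\sigma_{2}^{2} + (1-p)\sigma_{1}^{2}\bigr) - p(1-p)(\sigma_{2}^{2}-\sigma_{1}^{2}) = \sigma_{1}^{2}(1-p)^{2} - \sigma_{2}^{2}p^{2}.
\]
This is positive iff $\sigma_{1}(1-p) > \sigma_{2}p$, i.e. iff $p < p_{\Delta}$. So $g$ is strictly increasing on $[0,p_{\Delta}]$ and strictly decreasing on $[p_{\Delta},1]$, which makes $p_{\Delta}$ the unique maximizer. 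The same unimodality shows that the maximizer over the interval $\mathcal{I}_{\L}$ is the projection $p_{\L}^{*}$, and that it is unique when $d \neq 0$. This interval is a compact convex image of $\L$, since $\L \subseteq \Delta(\V)$ is parametrized by $p$. Plugging $p_{\L}^{*}$ into $a(\cdot|Y)$ and $\rho$ gives the displayed formulas.

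\textbf{Uniqueness of the least favorable prior when $K=2$.} Any least favorable prior must maximize $\rho$, so it must equal $\l(p_{\L}^{*})$. If $Y_{1} = Y_{2}$, then $\rho \equiv 0$, every $\l$ is least favorable, and $a(\l|Y) = Y_{1}$ with risk $1$.

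\textbf{Main obstacle.} The main care point is the saddle-point step. I need to verify the hypotheses of Sion's theorem ($\R$ is not compact, but only one side needs compactness, and $\L$ is compact). I also need to justify that the argmax of $\rho$, paired with the unique minimizer, forms a saddle point rather than just matching values.
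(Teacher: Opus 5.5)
Your proposal is correct and follows essentially the same route as the paper: convexity in $a$ for uniqueness, Sion's minimax theorem, the sandwich $V=\min_a f(a,\lambda^*)\le f(\hat{\theta}^*,\lambda^*)\le\max_{\lambda} f(\hat{\theta}^*,\lambda)=V$, the first-order condition with the weighted-variance identity, and the derivative numerator $\sigma_1^2(1-p)^2-\sigma_2^2p^2$ for $K=2$. The only differences are cosmetic: the paper gets existence and uniqueness together from strong convexity and applies Sion after restricting $a$ to $[\min_k Y_k,\max_k Y_k]$, whereas you use coercivity plus strict convexity and invoke Sion with compactness on the $\Lambda$ side only, which is equally valid.
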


\begin{proof}
See Appendix \ref{arxiv1:app:proof:least.favorable}.      
\end{proof}

\begin{remark}[Computation]
The function $\l \mapsto \rho(\l|Y)$ is continuous and concave, where concavity follows because $\rho(\l|Y) = \min_{a \in \R}\sum_{k = 1}^{K}\l_{k}D_{k}(a|Y)^{2}$ is the pointwise minimum of functions that are affine in $\l$. Consequently, computing a least favorable prior amounts to maximizing a concave objective over the compact convex set $\L$, which yields a standard finite-dimensional convex-optimization problem when $\L$ is represented by convex constraints. Once any maximizing $\l^{*}$ is obtained, the optimal headline and optimized risk follow directly from $a(\l^{*}|Y)$ and $\rho(\l^{*}|Y) + 1$. The least favorable prior need not be unique, but every maximizing prior yields the same unique optimal headline and the same optimized risk. 
\end{remark}

\begin{remark}[Equivariance under affine transformations]
The optimal headline is equivariant under affine transformations in the sense that using $(\widetilde{Y}_{k}, \widetilde{\sigma}_{k}) = (\xi_{1} + \xi_{2}Y_{k}, |\xi_{2}|\sigma_{k})$ with $\xi_{1} \in \R$ and $\xi_{2} \neq 0$ yields headline $\xi_{1} + \xi_{2}\htheta_{\L}^{*}(Y)$. The optimized risk is unchanged. Thus, a reader interested in the affine transformation $\xi_{1} + \xi_{2}\theta$ can obtain the corresponding optimal headline by applying the same transformation to the given headline $\htheta_{\L}^{*}(Y)$. This shortcut does not extend to a nonlinear transformation $h(\theta)$, since $h(\htheta_{\L}^{*}(Y))$ generally differs from the optimal headline obtained by formulating the decision problem for $h(\theta)$ on the transformed scale. 
\end{remark}

\begin{remark}[Off-diagonal covariance] 
The optimal headline and optimized risk depend on $\Sigma$ only through the design variances $\sigma_{1}^{2}, \ldots, \sigma_{K}^{2}$. This is a general implication of squared error loss under $\mu$-posterior $\pi_{0}(\cdot|Y) = N(Y,\Sigma)$. Off-diagonal covariance terms affect the sampling distribution of $(\htheta_{\L}^{*}(Y), R_{\L}^{*}(Y))$, but they do not enter the decision problem under the default specifications.
\end{remark}

\subsection{Unrestricted-Ambiguity Benchmark}\label{arxiv1:sec:characterization.full.ambiguity}
Under the simplex class $\L_{\Delta} = \Delta(\V)$, for any candidate headline $a$, a worst-case design prior can be chosen to place probability one on a design that attains the largest standardized discrepancy $\max_{k \in [K]}D_{k}(a|Y)$. Therefore, the optimal headline and optimized risk reduce to
\begin{align*}
    \htheta_{\Delta}^{*}(Y) = \arg\min_{a \in \R} \max_{k \in [K]} D_{k}(a|Y), \quad R_{\Delta}^{*}(Y) = \paren{\min_{a \in \R}\max_{k \in [K]}D_{k}(a|Y)}^{2} + 1.
\end{align*}
Define the \textit{reconciliation cost} $c_{\Delta}^{*}(Y) = \sqrt{R_{\Delta}^{*}(Y) - 1}$.

\begin{proposition}[Unrestricted ambiguity]\label{arxiv1:prop:unrestricted.simplex}
The optimal headline and optimized risk are
\begin{align*}
    \htheta_{\Delta}^{*}(Y) = \max_{k \in [K]}\curly{Y_{k} - \sigma_{k}c_{\Delta}^{*}(Y)} = \min_{k \in [K]}\curly{Y_{k} + \sigma_{k}c_{\Delta}^{*}(Y)}, \quad R_{\Delta}^{*}(Y)
    = \paren{\max_{j < k}\frac{\abs{Y_{j} - Y_{k}}}{\sigma_{j} + \sigma_{k}}}^{2} + 1.
\end{align*}
If $(j^{*},k^{*})$ attains the pairwise maximum in $R_{\Delta}^{*}(Y)$, then
\begin{align*}
    \htheta_{\Delta}^{*}(Y) = \frac{\sigma_{k^{*}}Y_{j^{*}} + \sigma_{j^{*}}Y_{k^{*}}}{\sigma_{j^{*}} + \sigma_{k^{*}}}, \quad c_{\Delta}^{*}(Y) = D_{j^{*}}(\htheta_{\Delta}^{*}(Y)|Y) = D_{k^{*}}(\htheta_{\Delta}^{*}(Y)|Y) = \frac{\abs{Y_{j^{*}} - Y_{k^{*}}}}{\sigma_{j^{*}} + \sigma_{k^{*}}}.
\end{align*}
The reconciliation cost $c_{\Delta}^{*}(Y)$ is the smallest standard-error multiplier $c \geq 0$ such that the design intervals $I_{k}(c) = [Y_{k} - c\sigma_{k}, Y_{k} + c\sigma_{k}]$ have nonempty intersection. In particular,
\begin{align*}
    \bigcap_{k = 1}^{K} I_{k}(c_{\Delta}^{*}(Y)) = \curly{\htheta_{\Delta}^{*}(Y)}, \quad c_{\Delta}^{*}(Y) = \min\curly{c \geq 0: \bigcap_{k = 1}^{K} I_{k}(c) \neq \varnothing}.
\end{align*}
For $K = 2$, the solution is given by the above formulas with $(j^{*}, k^{*}) = (1,2)$. If $Y_{1} \neq Y_{2}$, the unique least favorable prior is $\l^{*} = (p_{\Delta}, 1 - p_{\Delta})'$, where $p_{\Delta} = \sigma_{1}/(\sigma_{1} + \sigma_{2})$ as in Proposition \ref{arxiv1:prop:least.favorable}.
\end{proposition}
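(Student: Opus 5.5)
We first reduce the simplex problem to a minimax over designs. For fixed $a$, the map $\l \mapsto \sum_{k}\l_{k}D_{k}(a|Y)^{2}$ is linear on $\Delta(\V)$. Its maximum is therefore attained at a vertex, so
\begin{align*}
R_{\Delta}(a|Y) = \max_{k}D_{k}(a|Y)^{2} + 1.
\end{align*}
Write $g(a) = \max_{k}D_{k}(a|Y)$. This is a maximum of the convex, piecewise-linear functions $a \mapsto \abs{Y_{k}-a}/\sigma_{k}$, and it is coercive. By Proposition \ref{arxiv1:prop:least.favorable} it has a unique minimizer, which equals $\htheta_{\Delta}^{*}(Y)$, and $c_{\Delta}^{*}(Y) = \min_{a} g(a)$.

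The interval characterization is immediate from the definitions. We have $g(a) \le c$ if and only if $\abs{Y_{k}-a} \le c\sigma_{k}$ for all $k$, that is, if and only if $a \in \bigcap_{k} I_{k}(c)$. Hence the sublevel set $\{g \le c\}$ equals $\bigcap_k I_k(c) = [\max_k(Y_k - c\sigma_k), \min_k(Y_k + c\sigma_k)]$. This set is nonempty exactly when $c \ge \min g = c_{\Delta}^{*}(Y)$, which gives the formula for $c_{\Delta}^{*}(Y)$ as the smallest multiplier with nonempty intersection.

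At $c = c_{\Delta}^{*}(Y)$ the sublevel set is $\arg\min g = \{\htheta_{\Delta}^{*}(Y)\}$, by uniqueness of the minimizer. It follows that $\max_k(Y_k - c^*\sigma_k) = \min_k(Y_k + c^*\sigma_k) = \htheta_{\Delta}^{*}(Y)$. As a self-contained check on uniqueness: if $Y$ is not constant, then at any $c$ where the interval is a nondegenerate segment, $c$ can be lowered. The right endpoint minus the left endpoint is continuous and strictly decreasing in $c$, with slope at most $-2\min_k\sigma_k$, so at the minimum it is exactly $0$. If $Y$ is constant, then $c^* = 0$ and the intersection is the singleton $\{Y_1\}$.

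Next we compute $c^*$ through a pairwise reduction. The intersection $\bigcap_k I_k(c)$ is nonempty if and only if $Y_j - c\sigma_j \le Y_k + c\sigma_k$ for all pairs $j,k$. This is the one-dimensional Helly property: a family of intervals intersects if and only if every pair intersects, since $\max_j$ of the left endpoints is at most $\min_k$ of the right endpoints. The pairwise condition is $c \ge \abs{Y_j - Y_k}/(\sigma_j + \sigma_k)$. Therefore
\begin{align*}
c^* = \max_{j<k}\abs{Y_j - Y_k}/(\sigma_j + \sigma_k),
\end{align*}
which gives $R_{\Delta}^{*}(Y)$.

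For a maximizing pair $(j^*,k^*)$, intervals $j^*$ and $k^*$ at $c^*$ touch at exactly one point. Say $Y_{j^*} \le Y_{k^*}$; the touching point is
\begin{align*}
Y_{j^*} + c^*\sigma_{j^*} = Y_{k^*} - c^*\sigma_{k^*} = \frac{\sigma_{k^*}Y_{j^*} + \sigma_{j^*}Y_{k^*}}{\sigma_{j^*} + \sigma_{k^*}}.
\end{align*}
The full intersection is nonempty and contained in this single point, so it equals that point. This yields the formula for $\htheta_{\Delta}^{*}(Y)$, and $D_{j^*} = D_{k^*} = c^*$ there.

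For $K=2$ the pair is forced to be $(1,2)$. The least-favorable prior then follows from Proposition \ref{arxiv1:prop:least.favorable} with $\mathcal{I}_{\L} = [0,1]$, so $p^*_{\L} = p_\Delta$, and uniqueness holds when $Y_1 \ne Y_2$.

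The only delicate point is showing that the intersection at $c^*$ is exactly a singleton. This relies on the uniqueness from Proposition \ref{arxiv1:prop:least.favorable}, or alternatively on the touching-pair argument above, which handles it directly.
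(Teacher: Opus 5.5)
Your proof is correct and follows essentially the same route as the paper: you identify the sublevel sets of $\max_k D_k(a|Y)$ with $\bigcap_k I_k(c)$, reduce nonemptiness to the pairwise conditions $\abs{Y_j - Y_k} \le c(\sigma_j + \sigma_k)$, identify the headline as the touching point of the active pair, and take the $K=2$ least favorable prior from Proposition \ref{arxiv1:prop:least.favorable}. One harmless slip in your optional uniqueness check: the width $\min_k(Y_k + c\sigma_k) - \max_k(Y_k - c\sigma_k)$ is strictly increasing in $c$ (slope at least $2\min_k \sigma_k$), not decreasing, but the conclusion that it equals $0$ at $c_{\Delta}^{*}(Y)$ is unaffected.
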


\begin{proof}
See Appendix \ref{arxiv1:app:proof:unrestricted.simplex}.        
\end{proof}

For $K = 2$, the solution is
\begin{align*}
    \htheta_{\Delta}^{*}(Y)
    = \frac{\sigma_{1}^{-1}}{\sigma_{1}^{-1} + \sigma_{2}^{-1}}Y_{1} + \frac{\sigma_{2}^{-1}}{\sigma_{1}^{-1} + \sigma_{2}^{-1}}Y_{2}, \quad
    R_{\Delta}^{*}(Y) = \paren{\frac{\abs{Y_{1} - Y_{2}}}{\sigma_{1} + \sigma_{2}}}^{2} + 1.
\end{align*}
In this case, the optimal headline $ \htheta_{\Delta}^{*}(Y)$ is an inverse-standard-deviation-weighted average of the two design estimates. It places higher weight on the more precise design, but less aggressively than the conventional inverse-variance-weighted average used when estimates are independent and unbiased for a common parameter. The latter procedure chooses weights to minimize the variance of a weighted estimator, while the above headline estimation procedure equalizes the two standardized discrepancies to account for ambiguity over research designs. The excess risk $ R_{\Delta}^{*}(Y)-1$ is the squared standardized separation between the two estimates.

For $K > 2$, an analogous solution is determined by the active pair $(Y_{j^{*}},Y_{k^{*}})$ attaining the largest standardized separation. The remaining estimates do not affect the headline unless they change which pair is active. A risk of one means exact numerical agreement across designs. A risk no greater than two means that one-standard-error intervals around each design estimate share a common point. Under the simplex class, the reconciliation cost $c_{\Delta}^{*}(Y)$ is the smallest common standard-error multiplier such that the design intervals share a common point, and $\htheta_{\Delta}^{*}(Y)$ is the unique intersection point at that multiplier.

\begin{remark}[Sensitivity to outliers]
The unrestricted-ambiguity benchmark has the feature of being sensitive to outlier design estimates. Because any one design may be target-valid with arbitrarily high probability under the simplex class, the simplex-optimal headline has to guard against dogmatic design priors. This means that a cluster of proximate design estimates does not necessarily outweigh a single discrepant estimate.
\end{remark}

\begin{remark}[Calibrating the reconciliation cost]
\label{arxiv1:remark:calibrating.cost}
Under the default posterior,
\begin{align*}
    \theta|Y,v = v_{k} \sim N(Y_{k},\sigma_{k}^{2}), \quad \forall k \in [K].
\end{align*}
For $\gamma \in (0,1)$, define the design-specific posterior credible
interval as
\begin{align*}
    C_{k}(\gamma|Y) = [Y_{k}-z_{1-\gamma/2}\sigma_{k}, Y_{k}+z_{1-\gamma/2}\sigma_{k}], \quad z_{1-\gamma/2} = \Phi^{-1}\paren{1-\frac{\gamma}{2}}.
\end{align*}
Conditional on $v=v_{k}$, the credible
interval $C_{k}(\gamma|Y)$ contains $\theta$ with posterior probability $1-\gamma$:
\begin{align*}
    \P[]{\theta \in C_{k}(\gamma|Y) \mid Y,v = v_{k}} = 1 - \gamma.
\end{align*}
This equality continues to hold for data-dependent $\gamma(Y)$,
since the posterior conditions on $Y$. Under the simplex class, the interval characterization in Proposition
\ref{arxiv1:prop:unrestricted.simplex} implies
\begin{align*}
    \gamma_{\Delta}^{*}(Y) = 2\paren{1-\Phi(c_{\Delta}^{*}(Y))} = \sup \curly{\gamma \in (0,1): \bigcap_{k = 1}^{K} C_{k}(\gamma|Y) \neq \varnothing}.
\end{align*}
Thus, $\gamma_{\Delta}^{*}(Y)$ is the largest common posterior
noncoverage level at which the design-specific credible intervals still
share a point. Equivalently, $1-\gamma_{\Delta}^{*}(Y)$ is the minimum
common credibility level required for mutual overlap. At this boundary, the
first common point is $\htheta_{\Delta}^{*}(Y)$.

The $95\%$ credibility level benchmark is given by $c_{\Delta}^{*}(Y) = z_{0.975} \approx 1.96$. For ease of reporting, I use the rounded $c = 2$ as a cost benchmark, which corresponds to risk $R = 2^{2} + 1 = 5$---a 400\% increase over the oracle risk. Under the simplex, a cost above two therefore means that even the approximately $95\%$ design-specific credible intervals do not share a common point.
\end{remark}

\begin{remark}[Reconciliation cost for other classes]
\label{arxiv1:remark:cost.other.classes}
More generally for the other classes $\L$ under consideration, define the reconciliation cost
\begin{align*}
    c_{\L}^{*}(Y) = \sqrt{R_{\L}^{*}(Y)-1}.
\end{align*}
For $\L \neq \L_{\Delta}$, the reconciliation cost does not generally have an interval-overlap interpretation. I nevertheless use $c = 2$, or equivalently $R = 5$, as a \textit{simplex-calibrated} benchmark for the other classes of interest.\footnote{$c_{\L}^{*}(Y) \leq c_{\Delta}^{*}(Y)$ because $\L \subseteq \Delta(\V) = \L_{\Delta}$. Thus, $c_{\L}^{*}(Y) > 2$ implies that $c_{\Delta}^{*}(Y)$ exceeds the two-standard-error multiplier benchmark. However, $c_{\L}^{*}(Y) \leq 2$ need not imply that all $95\%$ design-specific credible intervals overlap.}
\end{remark}

\subsection{Structured Ambiguity Classes}\label{arxiv1:sec:characterization.structured.ambiguity}
The simplex class accommodates cases where the researcher does not want to rule out any possible design prior. In other cases, the researcher's ambiguity may have additional structure, as with the perturbation and ratio classes defined in \eqref{arxiv1:eq:perturbation.class} and \eqref{arxiv1:eq:ratio.class}. I now characterize these structured ambiguity classes. They can serve as primary specifications when their restrictions are substantively justified or as sensitivity analyses relative to the unrestricted-ambiguity benchmark.

\subsubsection{Baseline-Prior Perturbation Class}
Consider the perturbation class $\L_{\epsilon}(\l^{0}) = (1 - \epsilon)\l^{0} + \epsilon \Delta(\V)$ for a baseline design prior $\l^{0} \in \Delta(\V)$.

\begin{proposition}[Baseline-prior ambiguity]\label{arxiv1:prop:epsilon.perturbation}
For perturbation level $\epsilon \in [0,1]$,
\begin{align*}
    \htheta_{\epsilon}^{*}(Y) = \arg\min_{a \in \R} R_{\epsilon}(a|Y), \quad R_{\epsilon}(a|Y) = \paren{(1 - \epsilon)\sum_{k = 1}^{K} \l_{k}^{0}D_{k}(a|Y)^{2} + \epsilon\max_{k \in [K]}D_{k}(a|Y)^{2}} + 1.
\end{align*}
Let $R_{\epsilon}^{*}(Y) = R_{\epsilon}(\htheta_{\epsilon}^{*}(Y)|Y)$. By Proposition \ref{arxiv1:prop:least.favorable}, the solution can be computed as
\begin{align*}
    \l_{\epsilon}^{*} \in \arg\max\curly{\rho(\l|Y): \l \in \Delta(\V), \l \geq (1-\epsilon)\l^{0}}, \quad \htheta_{\epsilon}^{*}(Y) = a(\l_{\epsilon}^{*}|Y), \quad
    R_{\epsilon}^{*}(Y) = \rho(\l_{\epsilon}^{*}|Y) + 1.
\end{align*}
The optimal headline is continuous in $\epsilon$ and follows a monotone path between its endpoints: if $\htheta_{0}^{*}(Y) < \htheta_{1}^{*}(Y)$, then $\epsilon \mapsto \htheta_{\epsilon}^{*}(Y)$ is nondecreasing; if $\htheta_{0}^{*}(Y) > \htheta_{1}^{*}(Y)$, then it is nonincreasing; and if the endpoints coincide, then the path is constant. At the endpoints,
\begin{align*}
    \htheta_{0}^{*}(Y) = a(\l^{0}|Y) = \frac{\sum_{k = 1}^{K}\l_{k}^{0}Y_{k}/\sigma_{k}^{2}}{\sum_{k = 1}^{K}\l_{k}^{0}/\sigma_{k}^{2}}, \qquad \htheta_{1}^{*}(Y) = \htheta_{\Delta}^{*}(Y).
\end{align*}
For $K = 2$, $p^{0} = \P[\l^{0}]{v = v_{1}}$, and $p_{\Delta} = \sigma_{1}/(\sigma_{1} + \sigma_{2})$, Proposition \ref{arxiv1:prop:least.favorable} gives
\begin{align*}
    p_{\epsilon}^{*} = \arg\min_{p \in \mathcal{I}_{\epsilon}}\abs{p - p_{\Delta}} = \min\curly{\max \curly{p_{\Delta}, (1 - \epsilon)p^{0}}, (1 - \epsilon)p^{0} + \epsilon}, \quad \mathcal{I}_{\epsilon} = [(1 - \epsilon)p^{0},(1 - \epsilon)p^{0} + \epsilon].
\end{align*}
In this case, the optimal headline and optimized risk are
\begin{align}\label{arxiv1:eq:perturbation.solution.K=2}
    \htheta_{\epsilon}^{*}(Y) = \frac{p_{\epsilon}^{*}Y_{1}/\sigma_{1}^{2} + (1 - p_{\epsilon}^{*})Y_{2}/\sigma_{2}^{2}}{p_{\epsilon}^{*}/\sigma_{1}^{2} + (1 - p_{\epsilon}^{*})/\sigma_{2}^{2}}, \quad R_{\epsilon}^{*}(Y)
    = \frac{p_{\epsilon}^{*}(1 - p_{\epsilon}^{*})(Y_{1} - Y_{2})^{2}}{p_{\epsilon}^{*}\sigma_{2}^{2} + (1 - p_{\epsilon}^{*})\sigma_{1}^{2}} + 1.
\end{align}
If $p_{\Delta} \in \mathcal{I}_{\epsilon}$, then $\htheta_{\epsilon}^{*}(Y) = \htheta_{\Delta}^{*}(Y)$. In particular, when $p^{0} \in (0,1)$, the perturbation-optimal headline coincides with the simplex-optimal headline if  $\epsilon \geq \max\{1-(p_{\Delta}/p^{0}), 1-(1-p_{\Delta})/(1-p^{0})\}$.
\end{proposition}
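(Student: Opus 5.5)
The argument has three parts: the form of the worst-case risk, the monotone path of the headline, and the closed form for $K=2$.

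\emph{Worst-case risk.} Every $\l \in \L_{\epsilon}(\l^{0})$ can be written as $(1-\epsilon)\l^{0} + \epsilon q$ with $q \in \Delta(\V)$, and $\l \mapsto \sum_k \l_k D_k(a|Y)^2$ is linear. Hence
\begin{align*}
\max_{\l \in \L_{\epsilon}(\l^{0})} \sum_k \l_k D_k(a|Y)^2 = (1-\epsilon)\sum_k \l^0_k D_k(a|Y)^2 + \epsilon \max_{q \in \Delta(\V)} \sum_k q_k D_k(a|Y)^2 .
\end{align*}
The last maximum equals $\max_k D_k(a|Y)^2$, since a linear function on the simplex is maximized at a vertex. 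This gives the stated form of $R_{\epsilon}(a|Y)$. The class is nonempty, compact and convex, and it equals $\{\l \in \Delta(\V): \l \ge (1-\epsilon)\l^{0}\}$. Proposition~\ref{arxiv1:prop:least.favorable} then gives the least-favorable representation directly.

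\emph{Endpoints and monotone path.} The endpoints are immediate: $\epsilon=0$ is a singleton class, whose solution is the Bayes headline $a(\l^{0}|Y)$, and $\epsilon=1$ is the simplex. Write $F(a)=\sum_k \l_k^0 D_k^2$ and $G(a)=\max_k D_k^2$; both are strictly convex in $a$. The minimizer $a_\epsilon$ of $(1-\epsilon)F+\epsilon G$ is unique.

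Continuity follows from Berge's maximum theorem. The objective is jointly continuous, it is coercive uniformly in $\epsilon$ (since $F$ is coercive and $(1-\epsilon)F+\epsilon G \ge \min(F,G)$), and the minimizer is unique.

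For monotonicity I will use the standard interpolation argument. Take $\epsilon<\epsilon'$. By optimality,
\begin{align*}
(1-\epsilon)F(a_\epsilon)+\epsilon G(a_\epsilon) &\le (1-\epsilon)F(a_{\epsilon'})+\epsilon G(a_{\epsilon'}),\\
(1-\epsilon')F(a_{\epsilon'})+\epsilon' G(a_{\epsilon'}) &\le (1-\epsilon')F(a_\epsilon)+\epsilon' G(a_\epsilon).
\end{align*}
Adding the two inequalities gives $(\epsilon'-\epsilon)[(G-F)(a_{\epsilon'})-(G-F)(a_\epsilon)]\le 0$. So $G-F$ is nonincreasing along the path.

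The main obstacle is converting this into monotonicity of $a_\epsilon$ itself, because $G-F$ need not be monotone in $a$. I would instead argue with first-order conditions. Let $a_0$ minimize $F$ and $a_1$ minimize $G$, and suppose $a_0<a_1$. On $(-\infty,a_0]$ both $F$ and $G$ are decreasing, so every minimizer lies in $[a_0,a_1]$. On that interval $F$ is increasing and $G$ is decreasing. The right-derivative of the objective is $(1-\epsilon)F'(a) + \epsilon G'_+(a)$. Here $F'>0$ on $(a_0,a_1]$ and $G'_+ \le 0$ before $a_1$, so raising $\epsilon$ lowers the subgradient pointwise. This is a single-crossing (Topkis-type) comparison that pushes the minimizer right, so $a_\epsilon$ is nondecreasing. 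The symmetric argument handles $a_0>a_1$. If $a_0=a_1$, that point minimizes both $F$ and $G$, so the path is constant.

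\emph{The case $K=2$.} Under the parametrization $\l=(p,1-p)'$, the constraint $\l \ge (1-\epsilon)\l^0$ becomes $p \in [(1-\epsilon)p^0,\,(1-\epsilon)p^0+\epsilon] = \mathcal{I}_\epsilon$. The $K=2$ part of Proposition~\ref{arxiv1:prop:least.favorable} then gives $p^*_\epsilon$ as the projection of $p_\Delta$ onto $\mathcal{I}_\epsilon$, along with the displayed headline and risk. When $p_\Delta \in \mathcal{I}_\epsilon$, the projection is $p_\Delta$ itself, so the headline coincides with $\htheta_\Delta^*$.

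For the threshold, $p_\Delta \in \mathcal{I}_\epsilon$ requires $(1-\epsilon)p^0 \le p_\Delta$, i.e.\ $\epsilon \ge 1-p_\Delta/p^0$. It also requires $(1-\epsilon)p^0+\epsilon \ge p_\Delta$, which is equivalent to $1-p_\Delta \ge (1-\epsilon)(1-p^0)$, i.e.\ $\epsilon \ge 1-(1-p_\Delta)/(1-p^0)$. Together these give the stated condition on $\epsilon$.
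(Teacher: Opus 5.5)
Your proposal is correct and follows the paper's proof almost step for step. The shared steps are the support-function computation, the appeal to Proposition~\ref{arxiv1:prop:least.favorable}, Berge's theorem for continuity, localization of all minimizers to $[a_0,a_1]$, and the projection argument for $K=2$. For continuity you confine minimizers by uniform coercivity, while the paper notes they lie in $[\min_k Y_k,\max_k Y_k]$.

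The only divergence is the monotonicity step, and the ``main obstacle'' you flag there is not really one. You have already shown that every minimizer lies in $[a_0,a_1]$. On that interval $F$ is strictly increasing and $G$ strictly decreasing, so $G-F$ is strictly decreasing there. Your interpolation inequality $(G-F)(a_{\epsilon'})\le (G-F)(a_\epsilon)$ therefore gives $a_{\epsilon'}\ge a_\epsilon$ immediately. This is exactly the paper's argument: with $B,M$ denoting your $F,G$, the objective $B+\epsilon(M-B)$ has decreasing differences on $[a_0,a_1]$, and Topkis plus uniqueness finishes. Your two added inequalities are just the proof of Topkis in this case.

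Your right-derivative argument is also valid. Suppose $a_{\epsilon'}<a_\epsilon$; then $a_{\epsilon'}\in[a_0,a_1)$, where $G'_+-F'<0$. So the right derivative of the $\epsilon'$-objective at $a_{\epsilon'}$ is below that of the $\epsilon$-objective. The latter is negative because $a_{\epsilon'}$ lies left of the unique minimizer $a_\epsilon$, which contradicts optimality of $a_{\epsilon'}$. This route needs the one-sided-derivative characterization of convex minimizers. It also needs the comparison restricted to $[a_0,a_1)$, since at $a_1$ the sign of $G'_+(a_1)-F'(a_1)$ is undetermined. It is a slightly heavier path to the same conclusion.
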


\begin{proof}
See Appendix \ref{arxiv1:app:proof:epsilon.perturbation}.
\end{proof}

The excess risk criterion $R_{\epsilon}(a|Y)-1$ is a convex combination of (i) the baseline design prior average of the squared standardized discrepancies and (ii) the maximum discrepancy, where the perturbation level $\epsilon$ controls how much the headline estimation procedure protects against the baseline design prior being incorrect. At $\epsilon = 0$, the researcher fully commits to the baseline prior, leading to Bayes headline $a(\l^{0}|Y)$. At $\epsilon = 1$, the perturbation class coincides with the simplex class, leading to the unrestricted-ambiguity headline $\htheta_{\Delta}^{*}(Y)$. The perturbation-optimal headline moves monotonically between these two endpoint headlines---or remains constant when they coincide. In particular, when $\l^{0} = \1/K$, the endpoint at $\epsilon = 0$ is the inverse-variance-weighted average; when the design estimates are independent, this is the conventional precision-weighted average. For sensitivity analysis relative to the unrestricted-ambiguity benchmark, a researcher can trace the perturbation-optimal headline and optimized risk over a range of $\epsilon$ or report the smallest perturbation level at which a qualitative conclusion changes. 

For $K = 2$ and an equal-probability baseline $\l^{0} = \1/2 = (1/2, 1/2)'$, solution \eqref{arxiv1:eq:perturbation.solution.K=2} yields
\begin{align*}
    \htheta_{\epsilon}^{*}(Y)
    = \frac{p_{\epsilon}^{*}Y_{1}/\sigma_{1}^{2} + (1 - p_{\epsilon}^{*})Y_{2}/\sigma_{2}^{2}}{p_{\epsilon}^{*}/\sigma_{1}^{2} + (1 - p_{\epsilon}^{*})/\sigma_{2}^{2}}, \quad p_{\epsilon}^{*} = \min\curly{\max \curly{p_{\Delta}, (1 - \epsilon)\frac{1}{2}}, (1 - \epsilon)\frac{1}{2} + \epsilon}, \quad \epsilon \in [0,1].
\end{align*}
The corresponding endpoints are
\begin{align*}
    \htheta_{0}^{*}(Y) = a(\l^{0}|Y) = \frac{Y_{1}/\sigma_{1}^{2} + Y_{2}/\sigma_{2}^{2}}{1/\sigma_{1}^{2} + 1/\sigma_{2}^{2}}, \quad \htheta_{1}^{*}(Y) = \htheta_{\Delta}^{*}(Y) = \frac{Y_{1}/\sigma_{1} + Y_{2}/\sigma_{2}}{1/\sigma_{1} + 1/\sigma_{2}}.
\end{align*}
Thus, the perturbation-optimal headline moves from the inverse-variance-weighted average toward the inverse-standard-deviation-weighted average. Every point on this path lies between those two endpoint headlines and can therefore be written as a convex combination of them, although the convex-combination weight is generally not equal to $\epsilon$.

\subsubsection{Favored-Design Ratio Class}
Consider the ratio class $\L_{r} = \{\l: \l_{1} \geq r\l_{k},\ \forall k \neq 1\}$. Denote a decreasing rearrangement of the squared standardized discrepancies $\curly{D_{k}(a|Y)^{2}: k \neq 1}$ as $D_{-1,(1)}(a|Y)^{2} \geq \cdots \geq D_{-1,(K - 1)}(a|Y)^{2}$.

\begin{proposition}[Favored-design ambiguity]\label{arxiv1:prop:preferred.design}
For ratio parameter $r \geq 1$,
\begin{align*}
    \htheta_{r}^{*}(Y) = \arg\min_{a \in \R} R_{r}(a|Y), \quad R_{r}(a|Y) = \max_{m \in \{0,\ldots,K - 1\}}\frac{rD_{1}(a|Y)^{2} + \sum_{j = 1}^{m}D_{-1,(j)}(a|Y)^{2}}{r + m} + 1.
\end{align*}
For each $m$, the corresponding term in the maximum is generated by an extreme design prior that places probability $r/(r+m)$ on the favored design and probability $1/(r+m)$ on each of the $m$ alternatives with the largest squared standardized discrepancies. If $m = 0$, the inner sum is set to zero. Let $R_{r}^{*}(Y) = R_{r}(\htheta_{r}^{*}(Y)|Y)$. By Proposition \ref{arxiv1:prop:least.favorable}, the solution can be computed as
\begin{align*}
    \l_{r}^{*} \in \arg\max\curly{\rho(\l|Y): \l \in \Delta(\V), \l_{1} \geq r\l_{k}, \forall k \neq 1}, \quad \htheta_{r}^{*}(Y) = a(\l_{r}^{*}|Y), \quad R_{r}^{*}(Y) = \rho(\l_{r}^{*}|Y) + 1.
\end{align*}
The optimized risk $R_{r}^{*}(Y)$ is nonincreasing in $r$ and, as $r \to \infty$,
\begin{align*}
    R_{r}^{*}(Y) \to 1, \quad \htheta_{r}^{*}(Y) \to Y_{1}.
\end{align*}
For $K = 2$, $p_{r} = r/(r+1)$, and $p_{\Delta} = \sigma_{1}/(\sigma_{1} + \sigma_{2})$, Proposition \ref{arxiv1:prop:least.favorable} gives
\begin{align*}
    p_{r}^{*} = \arg\min_{p \in \mathcal{I}_{r}}\abs{p - p_{\Delta}} = \max\curly{p_{\Delta}, p_{r}}, \quad \mathcal{I}_{r} = [p_{r}, 1].
\end{align*}
In this case, the optimal headline and optimized risk are
\begin{align*}
    \htheta_{r}^{*}(Y) = \frac{p_{r}^{*}Y_{1}/\sigma_{1}^{2} + (1 - p_{r}^{*})Y_{2}/\sigma_{2}^{2}}{p_{r}^{*}/\sigma_{1}^{2} + (1 - p_{r}^{*})/\sigma_{2}^{2}}, \quad R_{r}^{*}(Y) = \frac{p_{r}^{*}(1 - p_{r}^{*})(Y_{1} - Y_{2})^{2}}{p_{r}^{*}\sigma_{2}^{2} + (1 - p_{r}^{*})\sigma_{1}^{2}} + 1.
\end{align*}
If $p_{\Delta} \in \mathcal{I}_{r}$, then $\htheta_{r}^{*}(Y) = \htheta_{\Delta}^{*}(Y)$. Equivalently, the ratio-optimal and simplex-optimal headlines coincide if $r \leq \sigma_{1}/\sigma_{2}$.
\end{proposition}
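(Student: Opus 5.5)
The plan has four parts: compute the worst-case risk in closed form through the extreme points of $\L_{r}$, read off the solution from Proposition \ref{arxiv1:prop:least.favorable}, establish the monotonicity and limits in $r$, and finish with the $K=2$ specialization.

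\textbf{Worst-case risk.} For fixed $a$, the map $\l \mapsto \sum_{k}\l_{k}D_{k}(a|Y)^{2}$ is linear, and $\L_{r}$ is a compact polytope. Its maximum is therefore attained at an extreme point of $\L_{r}$. I will show these extreme points are the priors $\l^{(S)}$ indexed by subsets $S \subseteq \{2,\ldots,K\}$ with $|S| = m$, defined by
\[
\l_{1} = \frac{r}{r+m}, \qquad \l_{k} = \frac{1}{r+m} \ (k \in S), \qquad \l_{k} = 0 \ (k \notin S \cup \{1\}).
\]
To see this, normalize by $\l_{1}$. Setting $x_{k} = \l_{k}/\l_{1} \in [0,1/r]$, the class becomes the image of the box $[0,1/r]^{K-1}$ under the map $x \mapsto (1,x)/(1+\sum_{k} x_{k})$. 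This is a linear-fractional map with positive denominator, so it sends segments to segments and extreme points to extreme points. The vertices of the box, which have each $x_{k} \in \{0,1/r\}$, map exactly to the $\l^{(S)}$.

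Evaluating the linear objective at $\l^{(S)}$ gives $(rD_{1}^{2}+\sum_{k\in S}D_{k}^{2})/(r+m)$. For fixed $m$, this is maximized by taking $S$ to be the $m$ largest alternatives, which yields the stated formula for $R_{r}(a|Y)$ together with its extreme-prior interpretation.

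\textbf{Computation of the solution.} Since $\L_{r}$ is nonempty, compact and convex, the representation through $\l_{r}^{*}$, $a(\l_{r}^{*}|Y)$ and $\rho(\l_{r}^{*}|Y)+1$ is a direct instance of Proposition \ref{arxiv1:prop:least.favorable}.

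\textbf{Monotonicity and limits.} If $r' \geq r$, then $\L_{r'} \subseteq \L_{r}$. Hence $R_{r'}(a|Y) \leq R_{r}(a|Y)$ for every $a$, and minimizing over $a$ gives $R_{r'}^{*} \leq R_{r}^{*}$.

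For the limit of the risk, I evaluate at $a = Y_{1}$. There $D_{1} = 0$, so
\[
R_{r}(Y_{1}|Y) - 1 = \max_{m}\frac{\sum_{j\le m}D_{-1,(j)}(Y_{1}|Y)^{2}}{r+m} \leq \frac{\sum_{k\neq 1}D_{k}(Y_{1}|Y)^{2}}{r} \to 0 .
\]
Combined with $R_{r}^{*} \geq 1$, this gives $R_{r}^{*} \to 1$.

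For the limit of the headline, the $m=0$ term gives the lower bound $R_{r}(a|Y) \geq D_{1}(a|Y)^{2}+1$. Applying it at $a = \htheta_{r}^{*}$ yields
\[
D_{1}(\htheta_{r}^{*}|Y)^{2} \leq R_{r}^{*}-1 \to 0,
\]
so $\htheta_{r}^{*} \to Y_{1}$.

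\textbf{The case $K=2$.} Here $\L_{r} = \{\l(p): p \geq r/(r+1)\}$, so $\mathcal{I}_{\L} = [p_{r},1]$. Projecting $p_{\Delta}$ onto this interval gives $p_{r}^{*} = \max\{p_{\Delta},p_{r}\}$, and Proposition \ref{arxiv1:prop:least.favorable} then delivers the stated headline and risk.

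If $p_{\Delta} \in \mathcal{I}_{r}$, then $p_{r}^{*} = p_{\Delta}$, and the formula coincides with the simplex solution of Proposition \ref{arxiv1:prop:unrestricted.simplex}. The membership condition is equivalent to the stated ratio bound:
\[
p_{\Delta} \geq p_{r} \iff \frac{\sigma_{1}}{\sigma_{1}+\sigma_{2}} \geq \frac{r}{r+1} \iff \sigma_{1}(r+1) \geq r(\sigma_{1}+\sigma_{2}) \iff r \leq \frac{\sigma_{1}}{\sigma_{2}} .
\]

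The main obstacle is the first step, namely rigorously identifying the extreme points of $\L_{r}$. I would rely on the linear-fractional reparametrization described there. As a fallback, one can instead verify directly that every $\l \in \L_{r}$ is a convex combination of the $\l^{(S)}$, by sorting the ratios $x_{k}$ in decreasing order and peeling off nested subsets.
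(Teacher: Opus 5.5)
Your proposal is correct and follows essentially the same route as the paper. The paper uses the same reparametrization by the ratios $\l_k/\l_1$ onto a box (it writes $p_k = r\l_k/\l_1 \in [0,1]$), reduces to box vertices giving the priors $\l^{(S)}$, selects the top-$m$ alternatives, proves $r$-monotonicity and the limits by evaluating at $a = Y_1$ and using the $m=0$ term, and handles $K=2$ by the same projection. The one small difference is how vertex attainment is justified: you argue that the linear-fractional bijection preserves extreme points, whereas the paper shows $H(p)$ is monotone in each coordinate $p_k$ and then recovers $\L_r = \operatorname{conv}\{\l^{S}\}$ from support-function uniqueness.
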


\begin{proof}
See Appendix \ref{arxiv1:app:proof:preferred.design}.
\end{proof}

The extreme-prior formula gives a concrete interpretation of the criterion $R_{r}(a|Y)$. Think of the favored design as receiving $r$ units of prior probability and each selected alternative as receiving one unit. For a candidate headline and a fixed number $m$ of alternatives, a worst-case extreme prior selects the $m$ designs with the largest squared standardized discrepancies. The criterion then chooses the value of $m$ that gives the largest weighted average discrepancy. If the maximizing extreme prior is unique at the optimal headline, it is least favorable. When several extreme priors tie, the least-favorable prior may instead be a mixture, as described in Appendix \ref{arxiv1:app:proof:preferred.design}.

The class $\L_{r}$ shrinks as $r$ increases. Consequently, the optimized risk shrinks toward one and the optimal headline approaches the favored-design estimate. The ratio class lets a researcher quantify how strongly a favored design must be privileged before the optimal headline changes materially. A useful implementation reports the optimal headline and risk over a grid of $r$. This allows one to see where $\htheta_{r}^{*}(Y)$ changes sign or crosses a policy threshold, or to examine where $R_{r}^{*}(Y)$ is low relative to some benchmark---such as the simplex-optimized risk $R_{\Delta}^{*}(Y)$.

\section{Asymptotic Results}\label{arxiv1:sec:asymptotic.results}
The foregoing results are developed under normally distributed estimates $Y \sim N(\mu, \Sigma)$ and a known covariance matrix $\Sigma$. In practice, the researcher can implement the headline estimation procedure with non-normal estimates and an estimated covariance matrix. I study this plug-in procedure below, which I propose as the default practical implementation.

For sample size $n \to \infty$, let $Y_{n} = (Y_{1,n}, \ldots, Y_{K,n})'$ denote the vector of design estimates and let $\tSigma_{n}$ denote an estimated positive definite covariance matrix for $Y_{n}$. The corresponding design standard errors and standardized discrepancies are
\begin{align*}
    \tsigma_{k,n} = \sqrt{v_{k}'\tSigma_{n}v_{k}}, \quad D_{k,n}(a|Y_{n}) = \frac{\abs{Y_{k,n} - a}}{\tsigma_{k,n}}.
\end{align*}
Fix a nonempty, compact, and convex class of design priors $\L \subseteq \Delta(\V)$. The proposed plug-in implementation simply replaces $(Y, \Sigma)$ in Section \ref{arxiv1:sec:proposed.procedures} with $(Y_{n}, \tSigma_{n})$, which yields
\begin{align*}
    \htheta_{n,\L}^{*} = \arg\min_{a \in \R}\max_{\l \in \L}\sum_{k = 1}^{K}\l_{k}D_{k,n}(a|Y_{n})^{2}, \quad R_{n,\L}^{*} = \min_{a \in \R}\max_{\l \in \L}\sum_{k = 1}^{K}\l_{k}D_{k,n}(a|Y_{n})^{2} + 1.
\end{align*}
In words, the plug-in optimal headline and plug-in risk $(\htheta_{n,\L}^{*}, R_{n,\L}^{*})$ are obtained by applying the headline estimation criterion directly to the estimates $Y_{k,n}$ and standard errors $\tsigma_{k,n}$. The formulas developed previously for classes $(\L_{\Delta}, \L_{\epsilon}(\l^{0}), \L_{r})$ therefore apply directly.

\begin{assumption}[Local design disagreement]\label{arxiv1:ass:local.gaussian}
For target-valid design $k_{0} \in [K]$, parameter $\theta \in \R$, and local discrepancy vector $b = (b_{1}, \ldots, b_{K})' \in \R^{K}$ satisfying $b_{k_{0}} = 0$,
\begin{align*}
    \sqrt{n}(Y_{n} - \mu_{n}) \to[d] N(0,\Sigma), \quad \mu_{n} = \theta\1 + \frac{b}{\sqrt{n}},
\end{align*}
where $\Sigma$ is positive definite and $n\tSigma_{n} \to[p] \Sigma$.
\end{assumption}

Assumption \ref{arxiv1:ass:local.gaussian} considers differences in estimands that are on the same order as sampling uncertainty. The target-valid design estimand satisfies $\mu_{k_{0},n} = \theta$, while the other estimands may differ from $\theta$ by $b_{k}/\sqrt{n}$. The purpose of this local sequence is to produce a nondegenerate limit decision problem. Define objects for the Gaussian limit decision problem as
\begin{align*}
    \htheta_{\L}^{*}(z) = \arg\min_{a \in \R}\max_{\l \in \L}\sum_{k = 1}^{K}\l_{k}\frac{(z_{k} - a)^{2}}{\sigma_{k}^{2}}, \quad R_{\L}^{*}(z)
    = \min_{a \in \R}\max_{\l \in \L}\sum_{k = 1}^{K}\l_{k}\frac{(z_{k} - a)^{2}}{\sigma_{k}^{2}} + 1, \quad z \in \R^{K}.
\end{align*} 

\begin{proposition}[Local asymptotic behavior of the plug-in procedure]\label{arxiv1:prop:plugin.local}
Under Assumption \ref{arxiv1:ass:local.gaussian},
\begin{align*}
    \sqrt{n}(\htheta_{n,\L}^{*} - \theta) \to[d] \htheta_{\L}^{*}(Z), \quad
    R_{n,\L}^{*} \to[d] R_{\L}^{*}(Z), \quad Z \sim N(b,\Sigma).
\end{align*}
\end{proposition}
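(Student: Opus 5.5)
The plan is to exploit equivariance of the procedure under location shifts and scaling, and then apply the continuous mapping theorem. Write $Z_{n} = \sqrt{n}(Y_{n} - \theta\1)$ and $S_{n} = n\tSigma_{n}$, with diagonal entries $s_{k,n}^{2} = n\tsigma_{k,n}^{2}$. Under Assumption \ref{arxiv1:ass:local.gaussian}, $\sqrt{n}(Y_{n} - \theta\1) = \sqrt{n}(Y_{n} - \mu_{n}) + b$, so $Z_{n} \to[d] Z \sim N(b,\Sigma)$. Also $s_{k,n}^{2} \to[p] \sigma_{k}^{2}$, and Slutsky gives the joint convergence $(Z_{n}, s_{1,n},\ldots,s_{K,n}) \to[d] (Z, \sigma_{1},\ldots,\sigma_{K})$ with a deterministic second block.

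\textbf{Step 1 (reparametrize).} Substitute $a = \theta + u/\sqrt{n}$ in the plug-in criterion. This gives
\begin{align*}
    \sum_{k}\l_{k}\frac{(Y_{k,n} - a)^{2}}{\tsigma_{k,n}^{2}} = \sum_{k}\l_{k}\frac{(Z_{k,n} - u)^{2}}{s_{k,n}^{2}}.
\end{align*}
Hence $\sqrt{n}(\htheta_{n,\L}^{*} - \theta) = G(Z_{n}, s_{n})$ and $R_{n,\L}^{*} = H(Z_{n}, s_{n})$, where for $z \in \R^{K}$ and $s \in (0,\infty)^{K}$
\begin{align*}
    G(z,s) = \arg\min_{u}\max_{\l\in\L}\sum_{k}\l_{k}(z_{k}-u)^{2}/s_{k}^{2}, \qquad H(z,s) = \min_{u}\max_{\l\in\L}\sum_{k}\l_{k}(z_{k}-u)^{2}/s_{k}^{2} + 1.
\end{align*}
Uniqueness of the minimizer comes from Proposition \ref{arxiv1:prop:least.favorable}, applied with $(Y,\Sigma)$ replaced by $(z,\mathrm{diag}(s^{2}))$; it needs only positive variances, and the off-diagonal remark shows nothing else enters. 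We have $G(z,\sigma) = \htheta_{\L}^{*}(z)$ and $H(z,\sigma) = R_{\L}^{*}(z)$.

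\textbf{Step 2 (continuity of $G$ and $H$).} This is the main step. Set $F(u;z,s) = \max_{\l\in\L}\sum_{k}\l_{k}(z_{k}-u)^{2}/s_{k}^{2}$. The summands are jointly continuous, and $\L$ is compact, so Berge's maximum theorem makes $F$ jointly continuous. For each $\l$ the function is convex in $u$, hence $F$ is convex in $u$.

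To get coercivity, uniformly in $(z,s)$ near $(z^{0},s^{0})$, note that some $\l\in\L$ has $\sum_{k}\l_{k}/s_{k}^{2} \geq \min_{k} s_{k}^{-2} > 0$. This yields $F(u;z,s) \geq c(u - \bar z)^{2} - C$ on a neighborhood, with constants $c>0$ and $C$ not depending on the point in that neighborhood. Minimizers are therefore confined to a fixed compact interval $[-M,M]$ locally.

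Berge's theorem on $[-M,M]$ then gives continuity of the minimum value $H$. Since the minimizer is unique, the argmin correspondence is single-valued and upper hemicontinuous, so $G$ is continuous. An alternative route to $H$ is the representation $H = \max_{\l}\rho(\l) + 1$ from Proposition \ref{arxiv1:prop:least.favorable}. However, the argmin route handles $G$ directly.

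\textbf{Step 3 (conclude).} $(G,H)$ is continuous at every point $(z,\sigma)$ with $\sigma>0$, and the limit $(Z,\sigma)$ lies in that set with probability one. The continuous mapping theorem therefore gives
\begin{align*}
    \bigl(\sqrt{n}(\htheta_{n,\L}^{*}-\theta), R_{n,\L}^{*}\bigr) \to[d] \bigl(\htheta_{\L}^{*}(Z), R_{\L}^{*}(Z)\bigr),
\end{align*}
jointly, and in particular each marginal statement holds.

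The obstacle I expect is the uniform localization of the minimizer in Step 2. This is what lets the maximum theorem apply on a compact domain. The quadratic lower bound derived from any fixed $\l\in\L$ should handle it.
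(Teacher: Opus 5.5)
Your proof is correct and follows the paper's route: rescale to $(Z_{n}, s_{n})$, establish continuity of the argmin and min-value maps, and conclude by Slutsky and the continuous mapping theorem. The only differences are technical. The paper confines minimizers by projecting onto $[\min_{k} z_{k}, \max_{k} z_{k}]$, which is quicker than your coercivity bound and needs no separate upper bound on the minimum value. It then proves continuity by a uniform-convergence and subsequence argument rather than by invoking Berge's theorem.
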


\begin{proof}
See Appendix \ref{arxiv1:app:proof:plugin.local}.
\end{proof}

Proposition \ref{arxiv1:prop:plugin.local} gives the first-order asymptotic behavior of the plug-in headline estimation procedure. The normalized headline has the same limit as the optimal headline in the Gaussian limit decision problem, while the plug-in risk converges to the corresponding limit risk.

I next consider asymptotics with persistent design disagreement, where differences among design estimands do not shrink with sampling uncertainty.

\begin{proposition}[Persistent design disagreement]\label{arxiv1:prop:persistent.disagreement}
Suppose $Y_{n} \to[p] \mu^{0}$ and $n\tSigma_{n} \to[p] \Sigma$, where $\Sigma$ is positive definite. Define
\begin{align*}
    \theta_{\L}^{\mathrm{lim}}(\mu^{0}) = \arg\min_{a \in \R}\max_{\l \in \L}\sum_{k = 1}^{K}\l_{k}\frac{(\mu_{k}^{0} - a)^{2}}{\sigma_{k}^{2}}, \quad R_{\L}^{\mathrm{lim}}(\mu^{0}) = \min_{a \in \R}\max_{\l \in \L}\sum_{k = 1}^{K}\l_{k}\frac{(\mu_{k}^{0} - a)^{2}}{\sigma_{k}^{2}}.
\end{align*}
Then
\begin{align*}
    \htheta_{n,\L}^{*} \to[p] \theta_{\L}^{\mathrm{lim}}(\mu^{0}), \quad \frac{R_{n,\L}^{*}}{n} \to[p] R_{\L}^{\mathrm{lim}}(\mu^{0}).
\end{align*}
If every design receives positive probability under at least one prior in $\L$ and the coordinates of $\mu^{0}$ are not all equal, then $R_{\L}^{\mathrm{lim}}(\mu^{0}) > 0$ so that $R_{n,\L}^{*} \to[p] \infty$.
\end{proposition}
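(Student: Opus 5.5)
The plan is to reduce everything to a continuous-mapping argument for the deterministic map $(y,s) \mapsto (\htheta_{\L}^{*}, R_{\L}^{*}-1)$. Here $s = (s_{1},\ldots,s_{K})$ collects the standard deviations, and the map is computed from the criterion $F(a;y,s) = \max_{\l \in \L}\sum_{k}\l_{k}(y_{k}-a)^{2}/s_{k}^{2}$. The key observation is scale invariance: multiplying all standard errors by $\sqrt{n}$ multiplies $F$ by $n$, leaving the minimizer unchanged. So
\begin{align*}
\htheta_{n,\L}^{*} = \arg\min_{a}F(a;Y_{n},\sqrt{n}\,\tsigma_{n}), \quad \frac{R_{n,\L}^{*}-1}{n} = \min_{a}F(a;Y_{n},\sqrt{n}\,\tsigma_{n}).
\end{align*}
Since $n\tSigma_{n} \to[p] \Sigma$, we have $\sqrt{n}\,\tsigma_{k,n} \to[p] \sigma_{k} > 0$. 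Together with $Y_{n} \to[p] \mu^{0}$, the continuous mapping theorem yields both convergence claims (noting $(R_{n,\L}^{*}-1)/n$ and $R_{n,\L}^{*}/n$ share a limit), once the map is continuous at $(\mu^{0},\sigma)$.

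Continuity is the main step. First, $F(a;y,s)$ is jointly continuous in $(a,y,s)$ on $\R \times \R^{K} \times (0,\infty)^{K}$. This holds because it is a maximum over the compact set $\L$ of functions jointly continuous in $(\l,a,y,s)$, so Berge's maximum theorem applies. Second, for each fixed $(y,s)$, $F$ is strictly convex and coercive in $a$. Each $\l \in \L$ has some $\l_{k}>0$, so $\sum_{k}\l_{k}(y_{k}-a)^{2}/s_{k}^{2}$ is a strictly convex quadratic, and a pointwise max of strictly convex functions is strictly convex. Coercivity holds uniformly for $(y,s)$ in a neighborhood: $F(a;y,s) \geq \min_{k}(y_{k}-a)^{2}/s_{k}^{2}$, since the objective is a convex combination that dominates its smallest term. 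The minimizer therefore exists uniquely (as in Proposition \ref{arxiv1:prop:least.favorable}) and lies in a compact interval locally uniformly in $(y,s)$. Berge's theorem on that compact interval then gives continuity of the minimum value and upper hemicontinuity of the argmin. Because the argmin is a singleton, the minimizer is continuous.

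For the final claim, take $a^{*} = \theta_{\L}^{\mathrm{lim}}(\mu^{0})$. Since the coordinates of $\mu^{0}$ are not all equal, some $k$ has $\mu_{k}^{0} \neq a^{*}$. By hypothesis there is $\l \in \L$ with $\l_{k}>0$, so
\begin{align*}
R_{\L}^{\mathrm{lim}}(\mu^{0}) = F(a^{*};\mu^{0},\sigma) \geq \l_{k}(\mu_{k}^{0}-a^{*})^{2}/\sigma_{k}^{2} > 0.
\end{align*}
Then $R_{n,\L}^{*}/n \to[p] c>0$ implies $P(R_{n,\L}^{*} > M) \geq P(R_{n,\L}^{*}/n > c/2) \to 1$ for every $M$ (eventually $nc/2 > M$), i.e. $R_{n,\L}^{*} \to[p] \infty$.

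The obstacle I expect is the uniform-in-$(y,s)$ localization of the minimizer needed to apply Berge's theorem on a compact set. The lower bound by the smallest discrepancy term handles it: for $(y,s)$ near $(\mu^{0},\sigma)$, any $a$ outside a fixed large interval has $F$ exceeding $F(0;y,s)$, which is bounded nearby.
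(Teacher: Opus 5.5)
Your proposal is correct and takes essentially the paper's route. Both arguments rescale the plug-in objective so that $\htheta_{n,\L}^{*}$ and $(R_{n,\L}^{*}-1)/n$ become the minimizer and minimum of a fixed criterion evaluated at $(Y_{n},\sqrt{n}\,\tsigma_{n})$, prove these maps are continuous, apply the continuous mapping theorem, and get positivity of $R_{\L}^{\mathrm{lim}}(\mu^{0})$ from a prior that puts positive weight on a design with $\mu_{k}^{0}\neq a^{*}$. The differences are only in the continuity step (you use coercivity and Berge, the paper confines minimizers to $[\min_{k}c_{k},\max_{k}c_{k}]$ and uses a subsequence argument) and a small wording slip: scaling the standard errors by $\sqrt{n}$ divides $F$ by $n$, though your displayed identities are right.
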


\begin{proof}
See Appendix \ref{arxiv1:app:proof:persistent.disagreement}.
\end{proof}

Taken together, Propositions \ref{arxiv1:prop:plugin.local} and \ref{arxiv1:prop:persistent.disagreement} characterize the plug-in procedure in two asymptotic regimes. Under local disagreement, the differences among design estimands remain on the same order as sampling uncertainty and the normalized plug-in headline converges in distribution to the optimal headline for the corresponding Gaussian limit problem. Under persistent disagreement with $R_{\L}^{\mathrm{lim}}(\mu^{0}) > 0$, the headline converges in probability to a deterministic compromise that need not equal the target, while the diverging risk increasingly makes visible the difficulty of reconciling a common headline across designs.

\section{Empirical Applications}\label{arxiv1:sec:empirical.applications}
This section illustrates the headline estimation procedure in empirical applications to three papers: \citet{chen2026women}, \citet{garin2025impact}, and \citet{ganong2024spending}. For each application, I define the target parameter, candidate designs, baseline headline $a_{B}$, and class of design priors $\L$. The outputs of interest include the optimal headline $\htheta_{\L}^{*}$, baseline risk $R_{\L}^{B}$, optimized risk $R_{\L}^{*}$, active designs or least favorable design priors, and percentage reduction in risk when moving from the baseline to the optimal headline:
\begin{align*}
    \paren{1 - \frac{R_{\L}^{*}}{R_{\L}^{B}}} \times 100.
\end{align*}
The oracle risk is one, so a risk value of $R$ means a $(R - 1) \times 100$ percentage increase above the oracle value. When interpreting magnitudes, I also report the reconciliation cost $c_{\L}^{*} = \sqrt{R_{\L}^{*} - 1}$. Following Remarks \ref{arxiv1:remark:calibrating.cost} and \ref{arxiv1:remark:cost.other.classes}, I use $c = 2$ and $R = 5$ as simplex-calibrated benchmarks for cost and risk. Under the simplex class, exceeding this benchmark means that the $95\%$ design-specific credible intervals fail to share a common point. Although this interval-overlap characterization does not hold more generally for other classes, the cost and risk calibrations still serve as useful benchmarks. The calculations below are based on point estimates and standard errors from the papers' published tables. Because these inputs are rounded, the resulting outputs may differ slightly from calculations based on the papers' underlying data. 

\subsection{Returning to the Courtroom-Broadcasting Example}
Section \ref{arxiv1:sec:example.setting} introduced the courtroom-broadcasting setting of \citet{chen2026women} and gave results for their all-litigants sample and the simplex class $\L_{\Delta}$. I now return to that example and additionally consider the two-litigant sample and ratio classes $\L_{r}$ that privilege the paper's primary DiD design over their alternative Bartik IV design.

\begin{itemize}
    \item \textit{Target.} The effect of broadcasting intensity on the female-male plaintiff win-rate gap.
    \item \textit{Candidate designs.} The DiD and Bartik IV designs.\footnote{\citet[page 1576]{chen2026women} say that ``While the Bartik IV and DID strategies rely on distinct identifying assumptions, our results remain remarkably consistent across both approaches.''}
    \item \textit{Baseline headline.} The DiD estimate.
    \item \textit{Design-prior class.} The simplex class $\L_{\Delta}$, followed by the favored-DiD ratio class $\L_{r}$.
\end{itemize}

Table \ref{arxiv1:tab:chen.application} reports the simplex results using the published estimates in \citet[Tables 2 and 3]{chen2026women}. The all-litigants row reproduces the calculations from Section \ref{arxiv1:sec:example.setting}. For the two-litigant sample, the optimal headline is $\htheta_{\Delta}^{*} = 0.0651$, so a 10 percentage-point increase in broadcasting intensity corresponds to a 0.651 percentage-point narrowing of the gender gap.

\begin{table}[H]
\centering
\caption{Simplex-optimal headline estimation: \citet{chen2026women}}
\label{arxiv1:tab:chen.application}
\vspace{0.25em}
\makebox[\textwidth][c]{%
\begin{minipage}{1.05\textwidth}
\begin{threeparttable}
\small
\setlength{\tabcolsep}{2.5pt}
\renewcommand{\arraystretch}{1.12}
\begin{tabularx}{\linewidth}{@{}>{\raggedright\arraybackslash}p{0.14\linewidth}*{6}{>{\centering\arraybackslash}X}@{}}
\toprule
Sample & DiD estimate (SE) & Bartik IV estimate (SE) & Optimal headline $\htheta_{\Delta}^{*}$ & Baseline risk $R_{\Delta}^{B}$ & Optimal risk $R_{\Delta}^{*}$ & Risk reduction \\
\midrule
All litigants & 0.0394 (0.00255) & 0.0726 (0.00599) & 0.0493 & 31.72 & 16.11 & 49.2\% \\
Two litigants & 0.0536 (0.00336) & 0.0889 (0.00699) & 0.0651 & 26.50 & 12.63 & 52.3\% \\
\bottomrule
\end{tabularx}
\begin{tablenotes}[flushleft]
\footnotesize
\item \textit{Notes}. The estimates and standard errors are reported in \citet[Tables 2 and 3]{chen2026women}. The DiD estimate serves as the baseline headline $a_{B}$. Risk reduction is $(1 - (R_{\Delta}^{*}/R_{\Delta}^{B})) \times 100$.
\end{tablenotes}
\end{threeparttable}
\end{minipage}}
\end{table}

The optimal headlines reduce the risk corresponding to baseline DiD headlines by roughly half. Even after optimization, however, risk remains high because the differences between the DiD and Bartik IV estimates are large relative to their small standard errors. The optimized risks are 16.11 and 12.63 times the oracle risk, corresponding to costs $c_{\Delta}^{*} \approx 3.89$ and $c_{\Delta}^{*} \approx 3.41$. Both exceed the cost benchmark. Equivalently, the approximately $95\%$ design-specific credible intervals fail to share a common point. Thus, while the simplex-optimal headlines preserve the conclusion that broadcasting narrows the gender gap, the DiD and IV designs do not tightly reconcile one exact magnitude under unrestricted design ambiguity.

I now consider the structured ambiguity given by favored-design ratio classes that privilege DiD. In particular, a favored-DiD ratio class $\L_{r}$ requires DiD to be at least $r \geq 1$ times as likely to be target-valid as Bartik IV. Since $\sigma_{\mathrm{DiD}} < \sigma_{\mathrm{IV}}$, Proposition \ref{arxiv1:prop:preferred.design} implies that the least favorable prior places probabilities $r/(r + 1)$ and $1/(r + 1)$ on DiD and IV, respectively. Table \ref{arxiv1:tab:chen.ratio} reports the sensitivity path $r \mapsto (\htheta_{r}^{*}, R_{r}^{*})$ for a range of $r$.

\begin{table}[H]
\centering
\caption{Favored-DiD ratio sensitivity: \citet{chen2026women}}
\label{arxiv1:tab:chen.ratio}
\vspace{0.25em}
\begin{threeparttable}
\footnotesize
\setlength{\tabcolsep}{3.5pt}
\renewcommand{\arraystretch}{1.10}
\begin{tabular}{@{}lcrrrr@{}}
\toprule
& & \multicolumn{2}{c}{All litigants} & \multicolumn{2}{c}{Two litigants} \\
\cmidrule(lr){3-4}\cmidrule(lr){5-6}
Design-prior class & \shortstack{Minimum DiD \\ probability} & Headline $\htheta_{r}^{*}$ & Risk $R_{r}^{*}$ & Headline $\htheta_{r}^{*}$ & Risk $R_{r}^{*}$ \\
\midrule
Unrestricted simplex & -- & 0.0493 & 16.113 & 0.0651 & 12.632 \\
$r = 1$ & 0.500 & 0.0445 & 14.003 & 0.0602 & 11.358 \\
$r = 2$ & 0.667 & 0.0422 & 10.389 & 0.0573 & 8.621 \\
$r = 5$ & 0.833 & 0.0406 & 5.941 & 0.0552 & 5.063 \\
$r = 10$ & 0.909 & 0.0400 & 3.743 & 0.0544 & 3.266 \\
$r = 25$ & 0.962 & 0.0396 & 2.173 & 0.0539 & 1.972 \\
$r = 100$ & 0.990 & 0.0395 & 1.304 & 0.0537 & 1.252 \\
$r \to \infty$ & 1.000 & 0.0394 & 1.000 & 0.0536 & 1.000 \\
\bottomrule
\end{tabular}
\begin{tablenotes}[flushleft]
\footnotesize
\item \textit{Notes}. The favored-DiD ratio class requires $\l_{\mathrm{DiD}} \geq r\l_{\mathrm{IV}}$, so the minimum probability assigned to DiD is $r/(r + 1)$. At every finite $r$, the least favorable design prior places probabilities $r/(r + 1)$ and $1/(r + 1)$ on DiD and IV. The unrestricted row reproduces the simplex class calculations from Table \ref{arxiv1:tab:chen.application}. 
\end{tablenotes}
\end{threeparttable}
\end{table}

At $r=5$, the reconciliation costs are approximately $2.22$ and $2.02$, which remain just above the simplex-calibrated benchmark $c = 2$. At $r=10$, the costs fall to approximately $1.66$ and $1.51$, respectively. Thus, under a favored-DiD ratio of this strength, the reconciliation costs fall below the benchmark and the optimal headlines are close to the DiD estimates.

The $r=10$ class requires the prior probability on DiD to be at least $10/11 \approx 0.909$, so no admissible prior places more than approximately $9.1\%$ probability on IV. This restriction may be substantively reasonable if the Bartik IV design is viewed primarily as a robustness check for the paper's favored DiD design.\footnote{\citet[page 1576]{chen2026women} seem to cast the IV design as a robustness check: ``The primary empirical specification is the generalized difference-in-differences (DID) with continuous treatment, controlling for court-area and year-quarter fixed effects \ldots A potential threat to identification is that the aggregate reform intensity may be endogenous. For instance, it could be contaminated by unobserved confounding factors that vary across court-area and over time. To address this concern, we strengthen our analysis by implementing a Bartik-like instrumental variable (IV) strategy. \ldots The IV estimates are only slightly larger than the corresponding baseline DID estimates, suggesting that potential endogeneity concerns are limited. While the Bartik IV and DID strategies rely on distinct identifying assumptions, our results remain remarkably consistent across both approaches.''} Overall, this sensitivity analysis suggests that, under modest design ambiguity that favors DiD, there is scope for the DiD and Bartik IV designs to tightly reconcile one headline for the effect of broadcasting intensity on the gender win-rate gap.

\subsection{When the Published Headline Is Already Adequate}
\citet{garin2025impact} estimate the long-run effects of a 12-month incarceration sentence on labor-market outcomes and later incarceration exposure using two IV designs based on discontinuities in sentencing guidelines in North Carolina and random judge assignments in Ohio. The authors headline precision-weighted averages across the two states. The corresponding IVs need not identify the same local average treatment effect: the two designs operate in different populations and shift different sentencing margins. However, if one imagines a latent ``effect of a 12-month incarceration sentence for marginal felony defendants,'' then both IV designs plausibly speak to that target parameter, with ambiguity over which design provides the more appropriate link.

\begin{itemize}
    \item \textit{Target.} For each outcome of interest, the effect of a 12-month incarceration sentence for marginal felony defendants.
    \item \textit{Candidate designs.} The North Carolina and Ohio IV designs.\footnote{\citet[page 504]{garin2025impact} note that ``Using multiple research designs allows us to test the sensitivity of our results to empirical strategy.''}
    \item \textit{Baseline headline.} The published precision-weighted average.
    \item \textit{Design-prior class.} The simplex class $\L_{\Delta}$, followed by the perturbation class $\L_{\epsilon}(\l^{0})$ under an equal-probability baseline prior $\l^{0} = (1/2, 1/2)'$.
\end{itemize}

Panel A of Table \ref{arxiv1:tab:garin.application} compares the published precision-weighted averages $a_{B}$ with the simplex-optimal headlines $\htheta_{\Delta}^{*}$. For annual W-2 earnings, the baseline and optimized risks are 1.057 and 1.041; for cumulative earnings, they are 1.376 and 1.262. Adopting the simplex-optimal headline reduces risk by 1.5\% and 8.3\%, respectively. The risks remain close to the oracle benchmark, and the modest improvements show that the published precision-weighted averages are already close to optimal headlines for the target parameter.

Panel B presents a perturbation class sensitivity path $\epsilon \mapsto \htheta_{\epsilon}^{*}$ under the baseline prior that places equal probability on the North Carolina and Ohio IV designs. Treating the two estimates from the distinct state samples as independent, the paper's precision-weighted average has the same inverse-variance form as the uniform-prior headline at $\epsilon = 0$. Calculations from the state-design estimates reproduce the published averages up to small rounding differences. As $\epsilon$ increases, the optimal headline moves toward the inverse-standard-deviation simplex-optimal headline. The column $\Bar{\epsilon}$ reports the smallest perturbation level at which the sensitivity path reaches the simplex-optimal headline. The perturbation class results are consistent with those from the simplex class, so below I focus on the latter.

\begin{table}[H]
\centering
\caption{Simplex-optimal headline estimation and perturbation sensitivity: \citet{garin2025impact}}
\label{arxiv1:tab:garin.application}
\vspace{0.25em}
\makebox[\textwidth][c]{%
\begin{minipage}{1.095\textwidth}
\begin{threeparttable}
\small
\setlength{\tabcolsep}{1.8pt}
\renewcommand{\arraystretch}{1.12}
\begin{tabularx}{0.995\linewidth}{@{}>{\raggedright\arraybackslash}p{0.19\linewidth}*{7}{>{\centering\arraybackslash}X}@{}}
\toprule
\multicolumn{8}{@{}l}{\textit{Panel A. Published and simplex-optimal headlines}} \\
\addlinespace[2pt]
Outcome & N. Carolina (SE) & Ohio (SE) & Published average $a_{B}$ & Baseline risk $R_{\Delta}^{B}$ & Optimal headline $\htheta_{\Delta}^{*}$ & Optimal risk $R_{\Delta}^{*}$ & Risk reduction \\
\midrule
Annual W-2 & \shortstack{113.45\\(223.8)} &
\shortstack{233.97\\(371.5)} & 145.54 & 1.057 & 158.76 & 1.041 & 1.5\% \\
Cumulative W-2 & \shortstack{$-2{,}675$\\(782)} & \shortstack{$-3{,}881$\\(1,576)} & $-2{,}914$ & 1.376 & $-3{,}075$ & 1.262 & 8.3\% \\
Any W-2 & \shortstack{0.024\\(0.010)} & \shortstack{0.004\\(0.013)} & 0.016 & 1.852 & 0.015 & 1.756 & 5.2\% \\
Days incarcerated & \shortstack{3.20\\(3.31)} & \shortstack{13.50\\(2.52)} & 9.72 & 4.880 & 9.05 & 4.121 & 15.5\% \\
\addlinespace[7pt]
\multicolumn{8}{@{}l}{\textit{Panel B. Uniform-prior perturbation path}} \\
\addlinespace[2pt]
Outcome & $\htheta_{0}^{*}$ & $\htheta_{0.10}^{*}$ & $\htheta_{0.25}^{*}$ & $\htheta_{1}^{*}$ & $\Bar{\epsilon}$ & $R_{0}^{*}$ & $R_{1}^{*}$ \\
\midrule
Annual W-2 & 145.54 & 150.48 & 158.76 & 158.76 & 0.248 & 1.039 & 1.041 \\
Cumulative W-2 & $-2{,}913$ & $-2{,}954$ & $-3{,}026$ & $-3{,}075$ & 0.337 & 1.235 & 1.262 \\
Any W-2 & 0.0166 & 0.0156 & 0.0153 & 0.0153 & 0.130 & 1.743 & 1.756 \\
Days incarcerated & 9.72 & 9.23 & 9.05 & 9.05 & 0.136 & 4.065 & 4.121 \\
\bottomrule
\end{tabularx}
\begin{tablenotes}[flushleft]
\footnotesize
\item \textit{Notes}. The estimates, standard errors, and published averages are reported in \citet[Table III]{garin2025impact}. ``Annual W-2'' denotes annual W-2 earnings, ``Cumulative W-2'' denotes cumulative W-2 earnings, and standard errors appear below the corresponding state-design estimates. Annual W-2 earnings, any W-2 earnings, and days incarcerated average outcomes over years 5--9 post-case-filing. Cumulative W-2 earnings are measured as of five years post-case-filing. Panel A evaluates the published average under the simplex class and compares it with the simplex-optimal headline. The risk reduction is $(1 - (R_{\Delta}^{*}/R_{\Delta}^{B})) \times 100$. Panel B uses a uniform prior $\l^{0} = (1/2, 1/2)'$. Here, $\Bar{\epsilon} = \abs{\sigma_{1} - \sigma_{2}}/(\sigma_{1} + \sigma_{2})$ is the smallest perturbation level at which the $\epsilon$-perturbation-optimal $\htheta_{\epsilon}^{*}$ reaches the simplex-optimal $\htheta_{\Delta}^{*}$. Because $R_{\Delta}^{B}$ evaluates the published rounded average while $R_{0}^{*}$ evaluates the recomputed uniform-prior headline, the two risks can differ slightly.
\end{tablenotes}
\end{threeparttable}
\end{minipage}}
\end{table}

The implications of the headline estimation output closely mirror the paper's conclusions. The simplex-optimal headlines for annual W-2 earnings and the probability of having any W-2 earnings are small and positive, reinforcing the finding of no meaningful long-run reduction in earnings or employment. For cumulative W-2 earnings, the headline implies an earnings reduction of about \$3,075, close to the paper's baseline \$2,914, and thus supports the conclusion that earnings lost during incarceration are not subsequently recovered. For long-run incarceration exposure, the headline implies roughly 9 additional days incarcerated per year. Its larger risk indicates weaker agreement about the exact amount of residual incarceration exposure across states, although both designs imply that most of the initial sentence effect has dissipated.

\subsection{A Stable High-MPC Conclusion Across Policy Episodes}
\citet{ganong2024spending} estimate one-month marginal propensities to consume (MPCs) out of unemployment benefits for unemployed U.S. households using six designs based on different policy changes and comparison strategies in the context of the COVID-19 pandemic. A one-month MPC is the change in spending during the first month after a benefit change, divided by the benefit change. The first design compares unemployed households that receive benefits promptly with households that face processing delays. Four designs study the expiration or onset of \$600 and \$300 supplements by comparing unemployed households with matched employed households. The sixth design compares unemployed households in U.S. states that ended the \$300 supplement in June with those in states that ended it in September. The authors report the range of estimated MPCs across designs---0.27 to 0.42---and emphasize that spending responds sharply across all designs. They describe the waiting-for-benefits design as their sharpest identification strategy, while noting that it measures the response to total benefits during the unusually uncertain opening months of the pandemic; the other designs instead isolate supplement changes at other dates. I treat the six designs as candidate links to a latent target one-month MPC out of unemployment benefits.

\begin{itemize}
    \item \textit{Target.} The one-month MPC out of unemployment benefits.
    \item \textit{Candidate designs.} The six designs summarized in \citet[Table 1]{ganong2024spending}.\footnote{\citet[page 2907]{ganong2024spending} say that ``Each of these empirical exercises has distinct advantages and disadvantages, but they all lead to the same conclusion.''}
    \item \textit{Baseline headline.} The waiting-for-benefits design estimate---0.42---treated as a natural baseline because the paper calls that design its sharpest.
    \item \textit{Design-prior class.} The simplex class $\L_{\Delta}$.
\end{itemize}

Table \ref{arxiv1:tab:ganong.application} reports the simplex-optimal headline output for all six designs and, as a narrower comparison, for the four matched-employed supplement-change designs in \citet[Table 1, rows 2--5]{ganong2024spending}. Across all six designs, the optimal headline is $\htheta_{\Delta}^{*} = 0.345$ and the optimized risk is $R_{\Delta}^{*} = 57.25$, determined by the 0.42 waiting estimate and the 0.27 September-expiration estimate. Taking $a_{B} = 0.42$ as the baseline headline gives a baseline risk of $R_{\Delta}^{B} = 226.00$; adopting the optimal headline reduces risk by 74.7\%. Restricting attention to rows 2--5 yields an optimal headline of $\htheta_{\Delta}^{*} = 0.297$ and a smaller optimized risk of $R_{\Delta}^{*} = 8.11$.

\begin{table}[H]
\centering
\caption{Simplex-optimal headline estimation: \citet{ganong2024spending}}
\label{arxiv1:tab:ganong.application}
\vspace{0.25em}
\begin{threeparttable}
\footnotesize
\setlength{\tabcolsep}{3.5pt}
\renewcommand{\arraystretch}{1.12}
\begin{tabularx}{\textwidth}{@{}p{0.25\textwidth}p{0.14\textwidth}>{\centering\arraybackslash}p{0.13\textwidth}>{\centering\arraybackslash}p{0.10\textwidth}X@{}}
\toprule
Candidate design set & Estimate range & \shortstack{Optimal \\ headline $\htheta_{\Delta}^{*}$} & \shortstack{Optimal \\ risk $R_{\Delta}^{*}$} & Active designs \\
\midrule
All six designs & 0.27--0.42 & 0.345 & 57.25 & Waiting for benefits; \$300 expiration (September states) \\
\addlinespace[6pt]
Matched-employed supplement changes, rows 2--5 & 0.27--0.35 & 0.297 & 8.11 &
\$300 expiration (June states); \$300 expiration (September states) \\
\bottomrule
\end{tabularx}
\begin{tablenotes}[flushleft]
\footnotesize
\item \textit{Notes}. The estimates and standard errors are reported in \citet[Table 1, rows 1--6]{ganong2024spending}. The ``All six designs'' calculation uses the six published estimates and standard errors. The narrower calculation uses rows 2--5, which compare supplement changes using unemployed and matched-employed households.
\end{tablenotes}
\end{threeparttable}
\end{table}

The all-design headline of 0.345 implies that about 35 cents of an additional benefit dollar is spent in the first month, while the supplement-change headline of 0.297 implies about 30 cents. Both magnitudes support the conclusion that spending responses to expanded unemployment benefits are large. This evidence is an input to the paper's broader conclusion that temporary supplements can provide substantial demand support during recessions. At the same time, the all-design risk indicates that the six designs do not tightly support one exact MPC magnitude. The same is true for the supplement-change design set, although this set yields a much lower risk. Overall, the design-specific evidence supports a stable high-MPC conclusion more tightly than it supports one exact cross-episode value.

\section{Conclusion}\label{arxiv1:sec:conclusion}
Researchers often use multiple research designs to study a single question, but lack a dedicated framework for producing a headline estimate. This paper formalizes the headline estimation problem as a decision under ambiguity about which design supplies the valid identifying link to a scalar target parameter, and proposes a conditional Gamma-minimax procedure for headline estimation under ambiguity. The procedure yields an ambiguity-optimal headline together with a measure of worst-case posterior risk given the candidate designs. The empirical applications illustrate several uses of the headline estimation procedure across different ambiguity classes. The simplex class can separate agreement about the direction of an effect from reconciliation about its exact magnitude across designs. The perturbation class can show when an existing precision-weighted average is already close to optimal. Finally, the ratio class can quantify how strongly a favored design must be privileged before an optimal headline yields low risk. 

\clearpage
\bibliography{references}
\clearpage

\appendix


\hypersetup{pageanchor=false}
\setcounter{page}{1}
\pagenumbering{arabic}
\setcounter{section}{0}
\renewcommand{\thesection}{\Alph{section}}

\begin{center}
\LARGE Supplemental Appendix to \\ ``Headline Estimation with Multiple Research Designs''
\end{center}

\begin{center}
\large Vod Vilfort
\end{center}

\vspace{1.5em}

\section{Proofs}

\subsection{Proof of Proposition \ref{arxiv1:prop:precision.weights}}\label{arxiv1:app:proof:precision.weights}
The oracle posterior risk conditional on $v = v_{k}$ is
\begin{align*}
    \int_{\M} w_{k}(\mu_{k} - Y_{k})^{2} \pi_{0}(d\mu|Y) = w_{k}\curly{(Y_{k} - Y_{k})^{2} + \sigma_{k}^{2}} = w_{k}\sigma_{k}^{2}, \quad \forall k \in [K].
\end{align*}
Because $w_{k} > 0$ for all $k \in [K]$, equalizing the above across realizations of $v$ requires $w_{k}\sigma_{k}^{2} = c$ for a constant $c > 0$, and thus $w_{k} = c/\sigma_{k}^{2}$.

\subsection{Proof of Proposition \ref{arxiv1:prop:least.favorable}}\label{arxiv1:app:proof:least.favorable}
For each $\l \in \L$, the function
\begin{align*}
    a \mapsto g(a,\l) = \sum_{k = 1}^{K} \l_{k}D_{k}(a|Y)^{2}
\end{align*}
is strongly convex with second derivative at least $2/\max_{k} \sigma_{k}^{2}$. Hence, $R_{\L}(a|Y) - 1$ is strongly convex and therefore has a unique minimizer $\htheta_{\L}^{*}(Y)$.

For the characterization, consider the compact interval $\mathcal{I} = [\min_{k}Y_{k}, \max_{k}Y_{k}]$. Projecting any $a \notin \mathcal{I}$ onto $\mathcal{I}$ weakly reduces every $D_{k}(a|Y)$ and strictly reduces at least one. The minimization can thus be restricted to $a \in \mathcal{I}$. The function $(a, \l) \mapsto g(a, \l)$ is continuous, convex in $a$, and affine in $\l$. Thus, since $\mathcal{I}$ and $\L$ are compact and convex, the minimax theorem of \citet{sion1958general} gives
\begin{align*}
    \min_{a \in \mathcal{I}} \max_{\l \in \L} g(a,\l) = \max_{\l \in \L} \min_{a \in \mathcal{I}} g(a,\l).
\end{align*}
The left side is uniquely minimized at $a=\htheta_{\L}^{*}(Y)$. Compactness, continuity, and Berge’s maximum theorem ensure that the right side attains its maximum $V$ at any least favorable $\l^{*} \in \L$. Thus,
\begin{align*}
    V = \min_{a \in \mathcal{I}} g(a,\l^{*}) \leq g(\htheta_{\L}^{*}(Y),\l^{*}) \leq \max_{\l \in \L} g(\htheta_{\L}^{*}(Y),\l) = V.
\end{align*}
Both inequalities are therefore equalities. This means $\l^{*}$ is a worst-case prior at $a=\htheta_{\L}^{*}(Y)$ and $\htheta_{\L}^{*}(Y)$ minimizes $g(a,\l^{*})$. The first-order condition for the strictly convex quadratic $g(a,\l)$ is
\begin{align*}
    \sum_{k = 1}^{K} \l_{k}\frac{a - Y_{k}}{\sigma_{k}^{2}} = 0 \iff a(\l|Y) = \frac{\sum_{k = 1}^{K} \l_{k}Y_{k}/\sigma_{k}^{2}}{\sum_{k = 1}^{K} \l_{k}/\sigma_{k}^{2}}, \quad \forall \l \in \Delta(\V).
\end{align*}
In particular, $\htheta_{\L}^{*}(Y) = a(\l^{*}|Y)$. The weighted-variance formula with weights $\l_{k}/\sigma_{k}^{2}$ yields 
\begin{align*}
    \rho(\l|Y) = \sum_{k = 1}^{K}\frac{\l_{k}}{\sigma_{k}^{2}}\curly{Y_{k} - a(\l|Y)}^{2} = \frac{\sum_{j < k} (Y_{j} - Y_{k})^{2}(\l_{j}/\sigma_{j}^{2})(\l_{k}/\sigma_{k}^{2})}{\sum_{\ell = 1}^{K} \l_{\ell}/\sigma_{\ell}^{2}}.
\end{align*}
Evaluating $g(\htheta_{\L}^{*}(Y),\l^{*}) + 1 = \rho(\l^{*}|Y) + 1$ yields the formula for $R_{\L}^{*}(Y)$. 

For $K = 2$, let $\l(p) = (p,1-p)'$. Since $\L$ is nonempty, compact, and convex, the corresponding set $\{p \in [0,1]:\l(p) \in \L\}$ of probabilities on design $1$ is an interval $\mathcal{I}_{\L} = [\underline{p}_{\L},\overline{p}_{\L}]$. For fixed $p \in [0,1]$, the first-order condition above gives
\begin{align*}
    a(p) = a(\l(p)|Y) = \frac{pY_{1}/\sigma_{1}^{2} + (1 - p)Y_{2}/\sigma_{2}^{2}}{p/\sigma_{1}^{2} + (1 - p)/\sigma_{2}^{2}}, \quad \rho(p) = \rho(\l(p)|Y) = \frac{p(1 - p)(Y_{1} - Y_{2})^{2}}{p\sigma_{2}^{2} + (1 - p)\sigma_{1}^{2}}.
\end{align*}
When $Y_{1} \neq Y_{2}$, differentiation yields
\begin{align*}
    \rho'(p) = \frac{(Y_{1} - Y_{2})^{2}\curly{\sigma_{1}^{2}(1 - p)^{2}-\sigma_{2}^{2}p^{2}}}{\curly{p\sigma_{2}^{2} + (1 - p)\sigma_{1}^{2}}^{2}}.
\end{align*}
The derivative is positive for $p < p_{\Delta}$, zero at $p = p_{\Delta}$, and negative for $p > p_{\Delta}$. Thus, the maximizer of $\rho(p)$ over $\mathcal{I}_{\L}$ is the projection $p_{\L}^{*}$ of $p_{\Delta}$ onto this interval. Evaluating $a(p)$ and $\rho(p)+1$ at $p_{\L}^{*}$ yields the displayed headline and risk formulas, and strict monotonicity on either side of $p_{\Delta}$ gives uniqueness of the least favorable prior. If $Y_{1} = Y_{2}$, then $\rho(p) = 0$ for every $p$, so every admissible design prior $\l \in \L$ is least favorable and the same formulas give headline $Y_{1} = Y_{2}$ and risk one.

\subsection{Proof of Proposition \ref{arxiv1:prop:unrestricted.simplex}}\label{arxiv1:app:proof:unrestricted.simplex}
For $c \geq 0$, there exists $a$ satisfying $D_{k}(a|Y) \leq c$ for every $k$ if and only if
\begin{align*}
    a \in \bigcap_{k = 1}^{K} \brack{Y_{k} - c\sigma_{k}, Y_{k} + c\sigma_{k}}.
\end{align*}
Given this interval-containment condition, the intersection is nonempty if and only if
\begin{align*}
    \max_{j \in [K]} \curly{Y_{j} - c\sigma_{j}} \leq \min_{k \in [K]} \curly{Y_{k} + c\sigma_{k}}.
\end{align*}
Equivalently, for every ordered pair $(j, k)$,
\begin{align*}
    Y_{j} - Y_{k} \leq c(\sigma_{j} + \sigma_{k}).
\end{align*}
The smallest feasible interval multiplier is
\begin{align*}
    c_{\Delta}^{*}(Y) = \max_{j,k \in [K]} \frac{Y_{j} - Y_{k}}{\sigma_{j} + \sigma_{k}} = \max_{j < k} \frac{\abs{Y_{j} - Y_{k}}}{\sigma_{j} + \sigma_{k}}.
\end{align*}
If $(j^{*},k^{*})$ is an active pair, relabel its members if necessary so
that $Y_{j^{*}} \geq Y_{k^{*}}$. Then
\begin{align*}
    \max_{j \in [K]} \curly{Y_{j} - c_{\Delta}^{*}(Y)\sigma_{j}} &\geq \curly{Y_{j^{*}} - \frac{Y_{j^{*}} - Y_{k^{*}}}{\sigma_{j^{*}} + \sigma_{k^{*}}}\sigma_{j^{*}}} = \frac{\sigma_{k^{*}}Y_{j^{*}} + \sigma_{j^{*}}Y_{k^{*}}}{\sigma_{j^{*}} + \sigma_{k^{*}}} \\
    \min_{k \in [K]} \curly{Y_{k} + c_{\Delta}^{*}(Y)\sigma_{k}} &\leq \curly{Y_{k^{*}} + \frac{Y_{j^{*}} - Y_{k^{*}}}{\sigma_{j^{*}} + \sigma_{k^{*}}}\sigma_{k^{*}}} = \frac{\sigma_{k^{*}}Y_{j^{*}} + \sigma_{j^{*}}Y_{k^{*}}}{\sigma_{j^{*}} + \sigma_{k^{*}}}.
\end{align*}
Thus, since $c_{\Delta}^{*}(Y)$ is the smallest feasible multiplier, the interval-containment condition gives
\begin{align*}
     \max_{k \in [K]} \curly{Y_{k} - c_{\Delta}^{*}(Y)\sigma_{k}} = \htheta_{\Delta}^{*}(Y) = \min_{k \in [K]} \curly{Y_{k} + c_{\Delta}^{*}(Y)\sigma_{k}} = \frac{\sigma_{k^{*}}Y_{j^{*}} + \sigma_{j^{*}}Y_{k^{*}}}{\sigma_{j^{*}} + \sigma_{k^{*}}}.
\end{align*}
The condition also gives $\max_{k \in [K]}D_{k}(\htheta_{\Delta}^{*}(Y)|Y) \leq c_{\Delta}^{*}(Y)$. Note further that
\begin{align*}
    \max_{k \in [K]}D_{k}(\htheta_{\Delta}^{*}(Y)|Y) \geq D_{j^{*}}(\htheta_{\Delta}^{*}(Y)|Y) = D_{k^{*}}(\htheta_{\Delta}^{*}(Y)|Y) = c_{\Delta}^{*}(Y).
\end{align*}
It therefore follows that
\begin{align*}
    c_{\Delta}^{*}(Y) = \max_{k \in [K]}D_{k}(\htheta_{\Delta}^{*}(Y)|Y) = \sqrt{R_{\Delta}^{*}(Y)-1}, \quad R_{\Delta}^{*}(Y) = \paren{\frac{\abs{Y_{j^{*}} - Y_{k^{*}}}}{\sigma_{j^{*}} + \sigma_{k^{*}}}}^{2} + 1.
\end{align*}
For $K = 2$ and $\l(p) = (p, 1 - p)'$, the simplex class generates the interval
\begin{align*}
    \mathcal{I}_{\L_{\Delta}} = \{p \in [0,1]: \l(p) \in \L_{\Delta}\} = [0,1].
\end{align*}
Proposition \ref{arxiv1:prop:least.favorable} therefore gives $p_{\L_{\Delta}}^{*} = p_{\Delta}$, and hence unique $\l^{*} = (p_{\Delta}, 1 - p_{\Delta})'$ when $Y_{1} \neq Y_{2}$.

\subsection{Proof of Proposition \ref{arxiv1:prop:epsilon.perturbation}}\label{arxiv1:app:proof:epsilon.perturbation}
For any $q \in \R^{K}$, the definition of $\L_{\epsilon}(\l^{0})$ implies
\begin{align*}
    \max_{\l \in \L_{\epsilon}(\l^{0})}\sum_{k = 1}^{K}\l_{k}q_{k}
    = \max_{\eta \in \Delta(\V)}\sum_{k = 1}^{K}\curly{(1 - \epsilon)\l_{k}^{0} + \epsilon\eta_{k}}q_{k}
    = (1 - \epsilon)\sum_{k = 1}^{K}\l_{k}^{0}q_{k} + \epsilon\max_{k \in [K]}q_{k}.
\end{align*}
Substituting $q_{k} = D_{k}(a|Y)^{2}$ and adding one gives the formula for $R_{\epsilon}(a|Y)$. By Proposition \ref{arxiv1:prop:least.favorable}, $\htheta_{\epsilon}^{*}(Y)$ exists uniquely and a least favorable prior $\l_{\epsilon}^{*}$ can be computed as stated.

I now show continuity and monotonicity of $\epsilon \mapsto \htheta_{\epsilon}^{*}(Y) = a_{\epsilon} = \arg\min_{a \in \R}F_{\epsilon}(a)$, where
\begin{align*}
    F_{\epsilon}(a) = (1 - \epsilon)B(a) + \epsilon M(a), \quad B(a) = \sum_{k = 1}^{K}\l_{k}^{0}D_{k}(a|Y)^{2}, \quad M(a) = \max_{k \in [K]}D_{k}(a|Y)^{2}.
\end{align*}
Note $a_{\epsilon} \in [\min_{k}Y_{k},\max_{k}Y_{k}]$. On this compact interval, $(a,\epsilon) \mapsto F_{\epsilon}(a)$ is jointly continuous and $F_{\epsilon}$ has unique minimizer $a_{\epsilon}$, so Berge's maximum theorem implies continuity of $\epsilon \mapsto a_{\epsilon}$. If $a_{0} = a_{1}$, this common value uniquely minimizes both $B$ and $M$, and hence $F_{\epsilon}$ for every $\epsilon$. Suppose that $a_{0} < a_{1}$. Since $B$ and $M$ are strongly convex with unique minimizers $a_{0}$ and $a_{1}$, respectively, no $F_{\epsilon}$ can be minimized outside $[a_{0},a_{1}]$: if $a < a_{0}$, replacing $a$ with $a_{0}$ lowers both $B$ and $M$, while if $a > a_{1}$, replacing $a$ with $a_{1}$ lowers both. On $[a_{0},a_{1}]$, $B$ is strictly increasing and $M$ is strictly decreasing, so $M-B$ is strictly decreasing. Hence,
\begin{align*}
    F_{\epsilon}(a) = B(a) + \epsilon\curly{M(a)-B(a)}
\end{align*}
has decreasing differences in $(a,\epsilon)$. Topkis's theorem for minimization, together with uniqueness, implies $\epsilon \mapsto a_{\epsilon}$ is nondecreasing. The case $a_{0} > a_{1}$ is symmetric and implies nonincreasing $\epsilon \mapsto a_{\epsilon}$.
The $a_{\epsilon}$ endpoints follow from Proposition \ref{arxiv1:prop:least.favorable} for $\L_{0}(\l^{0}) = \{\l^{0}\}$ and Proposition \ref{arxiv1:prop:unrestricted.simplex} for $\L_{1}(\l^{0}) = \L_{\Delta}$.

For $K = 2$ and $\l(p) = (p, 1 - p)'$, the perturbation class generates the interval
\begin{align*}
    \mathcal{I}_{\epsilon} = \curly{p \in [0,1]: \l(p) \in \L_{\epsilon}(\l^{0})} = [(1 - \epsilon)p^{0},(1 - \epsilon)p^{0} + \epsilon].
\end{align*}
Proposition \ref{arxiv1:prop:least.favorable} therefore gives $p_{\epsilon}^{*}$ as the projection of $p_{\Delta}$ onto $\mathcal{I}_{\epsilon}$ and yields the headline and risk formulas. If $p_{\Delta} \in \mathcal{I}_{\epsilon}$, then $p_{\epsilon}^{*} = p_{\Delta}$ so that $\htheta_{\epsilon}^{*}(Y) = \htheta_{\Delta}^{*}(Y)$.

\subsection{Proof of Proposition \ref{arxiv1:prop:preferred.design}}\label{arxiv1:app:proof:preferred.design}
Every $\l \in \L_{r}$ satisfies $\l_{1} > 0$. For $k \neq 1$, define
$p_{k} = r\l_{k}/\l_{1} \in [0,1]$. The simplex constraint gives
\begin{align*}
    \l_{1} = \frac{r}{r+\sum_{k \neq 1}p_{k}}, \quad \l_{k} = \frac{p_{k}}{r+\sum_{j \neq 1}p_{j}}, \quad k \neq 1.
\end{align*}
Conversely, these formulas map every $p \in [0,1]^{K-1}$ to a design prior in $\L_{r}$. The support function for $\L_{r}$ is given by
\begin{align*}
    \max_{\l \in \L_{r}}\sum_{k=1}^{K}\l_{k}q_{k} = \max_{p \in [0,1]^{K-1}}H(p), \quad H(p) = \frac{ rq_{1} + \sum_{k \neq 1}p_{k}q_{k}}{r + \sum_{k \neq 1}p_{k}}, \quad q \in \R^{K}.
\end{align*}
For each $k \neq 1$, $H(p)$ can be expressed in the form $(A+p_{k}q_{k})/(B+p_{k})$, whose partial derivative with respect to $p_{k}$ has a sign that does not depend on $p_{k}$. A maximizer of $H(p)$ can therefore be chosen with $p_{k} \in \{0,1\}$ for each $k \neq 1$. That is, $\max_{p \in [0,1]^{K-1}}H(p) = \max_{p \in \{0,1\}^{K-1}}H(p)$.

Let $S = \{k \neq 1: p_{k}=1\}$ and $m = \abs{S}$. The corresponding prior $\l^{S}$ places probability $r/(r+m)$ on design $1$, probability $1/(r+m)$ on $k \in S$, and probability zero on other designs, yielding
\begin{align*}
    H(p) = \frac{rq_{1} + \sum_{k \in S}q_{k}}{r+m}, \quad \l_{1} = \frac{r}{r+m}, \quad \l_{k} = \frac{\I{k \in S}}{r+m}, \quad k \neq 1, \quad m = \abs{S}.
\end{align*}
For fixed $m$, this is maximized by choosing the $m$ largest alternative coordinates. Hence,
\begin{align*}
    \max_{\l \in \L_{r}} \sum_{k = 1}^{K}\l_{k}q_{k} = \max_{m \in \{0,\ldots,K-1\}} \frac{rq_{1} + \sum_{j = 1}^{m}q_{-1,(j)}}{r+m},
\end{align*}
where $q_{-1,(1)} \geq \cdots \geq q_{-1,(K-1)}$
is a decreasing rearrangement of $\curly{q_{k}: k \neq 1}$. For $q_{k} = D_{k}(a|Y)^{2}$, substituting into the above display and adding one gives the stated formula for $R_{r}(a|Y)$. By Proposition \ref{arxiv1:prop:least.favorable}, $\htheta_{r}^{*}(Y)$ exists uniquely and a least favorable prior $\l_{r}^{*}$ can be computed as stated.

For each $S \subseteq [K] \setminus \{1\}$, consider the priors $\l^{S}$ described above. Because the above equality for $\max_{\l \in \L_{r}}\l'q$ holds for every $q \in \R^{K}$, support function uniqueness yields the polytope
\begin{align*}
    \L_{r} = \operatorname{conv}\curly{\l^{S}: S \subseteq [K]\setminus\{1\}}.
\end{align*}
Consequently, since $\l \mapsto \l'q$ is affine, the maximizing design priors at $a = \htheta_{r}^{*}(Y)$ are the convex hull of the subset of maximizing extreme priors $\l^{S}$. Proposition \ref{arxiv1:prop:least.favorable} gives a least favorable prior in this convex hull. If the maximizing $m^{*}$ and $S^{*}$ are unique, the least favorable prior is $\l^{S^{*}}$.

For $r_{2} \geq r_{1}$, $\L_{r_{2}} \subseteq \L_{r_{1}}$, so $R_{r}^{*}(Y)$ is nonincreasing. To establish its limit, evaluate the criterion at $a = Y_{1}$. The extreme-prior formula gives
\begin{align*}
    R_{r}^{*}(Y) - 1 \leq \max_{m \in \{0,\ldots,K - 1\}} \frac{\sum_{j = 1}^{m}D_{-1,(j)}(Y_{1}|Y)^{2}}{r + m} \leq \frac{K - 1}{r}\max_{k \neq 1} D_{k}(Y_{1}|Y)^{2} \to 0.
\end{align*}
The dogmatic prior on the favored design belongs to $\L_{r}$ for every $r$, so
\begin{align*}
    D_{1}(\htheta_{r}^{*}(Y)|Y)^{2} \leq R_{r}^{*}(Y) - 1 \to 0.
\end{align*}
Therefore, $\htheta_{r}^{*}(Y) \to Y_{1}$.

For $K = 2$ and $\l(p) = (p, 1 - p)'$, the ratio class generates the interval
\begin{align*}
    \mathcal{I}_{r} = \curly{p \in [0,1]: \l(p) \in \L_{r}} = [p_{r}, 1].
\end{align*}
Proposition \ref{arxiv1:prop:least.favorable} therefore gives $p_{r}^{*}$ as the
projection of $p_{\Delta}$ onto $\mathcal{I}_{r}$ and yields the headline
and risk formulas. If $p_{\Delta} \in \mathcal{I}_{r}$, then $p_{r}^{*} = p_{\Delta}$ so that $\htheta_{r}^{*}(Y) = \htheta_{\Delta}^{*}(Y)$.

\subsection{Proof of Proposition \ref{arxiv1:prop:plugin.local}}\label{arxiv1:app:proof:plugin.local}
I first show continuity of functions that will also be used in the proof of Proposition \ref{arxiv1:prop:persistent.disagreement}. For $\mathbb{K}_{\Delta}$ the set of nonempty compact subsets of $\Delta(\V)$, $c \in \R^{K}$, $s \in (0,\infty)^{K}$, and $A \in \mathbb{K}_{\Delta}$, define
\begin{align*}
    \mathcal{Q}(a;c,s,A)
    = \max_{\l \in A}\sum_{k = 1}^{K}\l_{k}\frac{(c_{k} - a)^{2}}{s_{k}}, \quad \mathcal{T}(c,s,A) &= \arg\min_{a \in \R}\mathcal{Q}(a;c,s,A), \\
    \mathcal{M}(c,s,A) &= \min_{a \in \R}\mathcal{Q}(a;c,s,A).
\end{align*}
For every $\l \in A$, the criterion inside the maximum is strongly convex in $a$, with second derivative
\begin{align*}
    2\sum_{k = 1}^{K}\frac{\l_{k}}{s_{k}} \geq \frac{2}{\max_{k \in [K]}s_{k}}.
\end{align*}
The maximum over $\l \in A$ therefore preserves strong convexity and so $\mathcal{T}(c,s,A)$ exists uniquely. Moreover, projecting any $a$ outside of $[\min_{k}c_{k},\max_{k}c_{k}]$ onto this interval reduces every squared discrepancy. Hence,
\begin{align*}
    \mathcal{T}(c,s,A) \in \brack{\min_{k \in [K]}c_{k},\max_{k \in [K]}c_{k}}.
\end{align*}

To show continuity of $\mathcal{Q}$, define $q(a;c,s) \in \R^{K}$ with continuous components
\begin{align*}
    (a,c,s) \mapsto q_{k}(a;c,s) = \frac{(c_{k} - a)^{2}}{s_{k}}.
\end{align*}
For support function $h_{A}(q) = \max_{\l \in A}\l'q$, Hausdorff distance $d_{\mathrm{H}}$, $A,B \in \mathbb{K}_{\Delta}$, and $q,\widetilde{q} \in \R^{K}$,
\begin{align*}
    \abs{h_{A}(q) - h_{B}(\widetilde{q})}
    \leq \max_{\l \in A}\norm{\l}_{2}\norm{q - \widetilde{q}}_{2} + \norm{\widetilde{q}}_{2}d_{\mathrm{H}}(A,B) \leq \norm{q - \widetilde{q}}_{2} + \norm{\widetilde{q}}_{2}d_{\mathrm{H}}(A,B).
\end{align*}
Hence, $(a,c,s,A) \mapsto \mathcal{Q}(a;c,s,A) = h_{A}(q(a;c,s))$ is continuous.

To show continuity of $\mathcal{T}$ and $\mathcal{M}$, consider any sequence $(c_{j},s_{j},A_{j}) \to (c,s,A)$ and let
\begin{align*}
    \mathcal{Q}_{j}(a) = \mathcal{Q}(a;c_{j},s_{j},A_{j}), \quad
    \mathcal{Q}_{0}(a) = \mathcal{Q}(a;c,s,A), \quad (c_{j},s_{j},A_{j}) \to (c,s,A), \quad j \to \infty.
\end{align*}
Because $c_{j} \to c$, there is a compact interval $I \subseteq \R$ containing $[\min_{k}c_{j,k},\max_{k}c_{j,k}]$ for large $j$ and also containing $[\min_{k}c_{k},\max_{k}c_{k}]$. Thus, all corresponding minimizers lie in $I$. Since $s \in (0,\infty)^{K}$ and $s_{j} \to s$, the denominators are bounded away from zero for large $j$. It follows that
\begin{align*}
    \sup_{a \in I}\norm{q(a;c_{j},s_{j}) - q(a;c,s)}_{2} \to 0, \quad
    \sup_{a \in I}\norm{q(a;c,s)}_{2} < \infty.
\end{align*}
Combining these bounds with $d_{\mathrm{H}}(A_{j},A) \to 0$ and the support-function inequality gives
\begin{align*}
    \sup_{a \in I}\abs{\mathcal{Q}_{j}(a) - \mathcal{Q}_{0}(a)} \to 0.
\end{align*}
Let $a_{j} = \mathcal{T}(c_{j},s_{j},A_{j})$ and $a_{0} = \mathcal{T}(c,s,A)$. Take an arbitrary subsequence of $(c_{j},s_{j},A_{j})$, and hence of $a_{j}$. Compactness of $I$ gives a further subsequence, again indexed by $j$, such that $a_{j} \to \Bar{a} \in I$. By optimality of $a_{j}$, $\mathcal{Q}_{j}(a_{j}) \leq \mathcal{Q}_{j}(a)$ $\forall a \in I$. Uniform convergence on $I$ and continuity of $\mathcal{Q}_{0}$ imply
\begin{align*}
    \mathcal{Q}_{0}(\Bar{a}) \leq \mathcal{Q}_{0}(a), \quad \forall a \in I.
\end{align*}
The global minimizer of $\mathcal{Q}_{0}$ is in $I$, so $\Bar{a}$ is the global minimizer. Uniqueness gives $\Bar{a} = a_{0}$. Since every subsequence has a further subsequence converging to $a_{0}$, the full sequence satisfies $a_{j} \to a_{0}$. This proves continuity of $\mathcal{T}$. Finally,
\begin{align*}
    \abs{\mathcal{M}(c_{j},s_{j},A_{j}) - \mathcal{M}(c,s,A)}
    &\leq \abs{\mathcal{Q}_{j}(a_{j}) - \mathcal{Q}_{0}(a_{j})}
    + \abs{\mathcal{Q}_{0}(a_{j}) - \mathcal{Q}_{0}(a_{0})} \to 0,
\end{align*}
which proves continuity of $\mathcal{M}$.

Now for the main claim, consider
\begin{align*}
    Z_{n} = \sqrt{n}(Y_{n} - \theta\1), \quad
    s_{n} = \paren{n\tsigma_{1,n}^{2},\ldots,n\tsigma_{K,n}^{2}}', \quad
    s^{0} = \paren{\sigma_{1}^{2},\ldots,\sigma_{K}^{2}}'.
\end{align*}
The definitions of the plug-in headline and plug-in risk imply
\begin{align*}
    \sqrt{n}(\htheta_{n,\L}^{*} - \theta)
    = \mathcal{T}(Z_{n},s_{n},\L), \quad
    R_{n,\L}^{*} = \mathcal{M}(Z_{n},s_{n},\L) + 1.
\end{align*}
Assumption \ref{arxiv1:ass:local.gaussian} and Slutsky's theorem give
\begin{align*}
    (Z_{n},s_{n}) \to[d] (Z,s^{0}), \quad Z \sim N(b,\Sigma).
\end{align*}
Continuity of $\mathcal{T}$ and $\mathcal{M}$ and the continuous mapping theorem (CMT) therefore yield
\begin{align*}
    \sqrt{n}(\htheta_{n,\L}^{*} - \theta) \to[d] \mathcal{T}(Z,s^{0},\L) = \htheta_{\L}^{*}(Z), \quad R_{n,\L}^{*} \to[d] \mathcal{M}(Z,s^{0},\L) + 1 = R_{\L}^{*}(Z).
\end{align*}

\subsection{Proof of Proposition \ref{arxiv1:prop:persistent.disagreement}}\label{arxiv1:app:proof:persistent.disagreement}
Use the continuous functions $\mathcal{T}$ and $\mathcal{M}$ defined in the proof of Proposition \ref{arxiv1:prop:plugin.local}, and let
\begin{align*}
    s_{n} = \paren{n\tsigma_{1,n}^{2},\ldots,n\tsigma_{K,n}^{2}}', \quad
    s^{0} = \paren{\sigma_{1}^{2},\ldots,\sigma_{K}^{2}}'.
\end{align*}
Multiplying the plug-in objective by $1/n$ does not change its minimizer. Consequently,
\begin{align*}
    \htheta_{n,\L}^{*} = \mathcal{T}(Y_{n},s_{n},\L), \quad \frac{R_{n,\L}^{*}}{n} = \mathcal{M}(Y_{n},s_{n},\L) + \frac{1}{n}.
\end{align*}
Continuity of $\mathcal{T}$ and $\mathcal{M}$, $(Y_{n},s_{n}) \to[p] (\mu^{0},s^{0})$, and CMT therefore imply
\begin{align*}
    \htheta_{n,\L}^{*}
    \to[p] \mathcal{T}(\mu^{0},s^{0},\L)
    = \theta_{\L}^{\mathrm{lim}}(\mu^{0}), \quad \frac{R_{n,\L}^{*}}{n} \to[p] \mathcal{M}(\mu^{0},s^{0},\L)
    = R_{\L}^{\mathrm{lim}}(\mu^{0}).
\end{align*}
If $R_{\L}^{\mathrm{lim}}(\mu^{0}) = 0$, there exists an $a \in \R$ for which
\begin{align*}
    \max_{\l \in \L}\sum_{k = 1}^{K}\l_{k}\frac{(\mu_{k}^{0} - a)^{2}}{\sigma_{k}^{2}} = 0.
\end{align*}
Every summand is nonnegative. If every design receives positive probability under at least one prior in $\L$, the above equality requires $\mu_{k}^{0} = a$ for every $k \in [K]$. Therefore, if the coordinates of $\mu^{0}$ are not all equal, then $R_{\L}^{\mathrm{lim}}(\mu^{0}) > 0$.

\end{document}